\documentclass[11pt,pdftex,a4paper]{article}
\pdfoutput=1
\usepackage{verbatim}
\usepackage{amssymb}
\usepackage{lipsum}
\usepackage{multicol}
\usepackage{amsmath}
\usepackage{framed}
\usepackage{amsthm}
\usepackage{tikz}
\usepackage{color}
\usepackage{fullpage}
\usepackage{cite}
%\usepackage[dvips]{graphicx}
%\ifx\pdftexversion\undefined
%\usepackage[dvips]{graphicx}
%\else
%  \usepackage[pdftex]{graphicx}
%  \DeclareGraphicsRule{*}{mps}{*}{}
%\fi
\newif\ifcode
\codefalse
%%%
% comment this if you do not have
% algpseudocode package for algorithms.
\codetrue
%%%
\usepackage{macros}
\ifcode
\usepackage[ruled]{algorithm}
\usepackage{algpseudocode}
\usepackage{ioa_code}

%%%%%%%%%%%%%%%%%%%%%%%%%%%%%%%%%%%%%%%%%%%%%%%%%%%%%%%%%%%%%%%%%%%%%%%%%=
%%%%%%%%%%%%%%%%%%%%%%%%%%
%pour avoir plus de place
% \topmargin 0pt
% \advance \topmargin by -\headheight
% \advance \topmargin by -\headsep
% \textheight 8.9in
% \oddsidemargin 0pt
% \evensidemargin \oddsidemargin
% \marginparwidth 0.5in
% \textwidth 6.5in
% \setlength{\baselineskip}{13.2pt} % standard value is 13.75pt
 %fonte pour les algos
%\interfootnotelinepenalty=10
%fin de pour avoir plus de place
%du coup il faut commenter la suite
%\textwidth 130mm
%\textheight 215mm
%jusqu'ici
%\renewcommand{\baselinestretch}{1.5}

%%%%%%%%%%%%%%%%%%%%%%%%%%%%%%%%%%%%%%%%%%%%%%%%%%%%%%%%%%%%%%%%
% Definitions
%%%%%%%%%%%%%%%%%%%%%%%%%%%%%%%%%%%%%%%%%%%%%%%%%%%%%%%%%%%%%%%%

\newtheorem{theorem}{Theorem}
%\newtheorem{reptheorem}[]{Theorem \ref{th:sl}}
%\newreptheorem{theorem}{Theorem}

\newtheorem{claim}[theorem]{Claim}
\newtheorem{proposition}{Proposition}

\newtheorem{corollary}[theorem]{Corollary}
\newtheorem{definition}{Definition}
\newtheorem{lemma}[theorem]{Lemma}

\newcounter{linenumber}

\def\Nat{\ensuremath{\mathbb{N}}}

\newcommand{\remove}[1]{}

\newcommand{\Wset}{\textit{Wset}}
\newcommand{\Rset}{\textit{Rset}}

\newcommand{\txns}{\textit{txns}}

\newcommand{\Read}{\textit{read}}
\newcommand{\Write}{\textit{write}}
\newcommand{\TryC}{\textit{tryC}}
\newcommand{\TryA}{\textit{tryA}}

% for identifiers in code
% reserved words

\newcommand{\ignore}[1]{}

\begin{document}

\bibliographystyle{plain}

\title{Safety of Deferred Update \\ in Transactional Memory
%\title{What is Safe in Transactional Memory?}
}

\author{Hagit Attiya$^1$
~~~
Sandeep Hans$^1$
~~~
Petr Kuznetsov$^{2,3}$
~~~
Srivatsan Ravi$^2$\\
\\
$^1$\normalsize Department of Computer Science, Technion\\
$^2$\normalsize Telekom Innovation Laboratories, TU Berlin\\
$^3$\normalsize T\'el\'ecom ParisTech
}
%\\
%\\
%\normalsize TU Berlin/Deutsche Telekom Laboratories
%\and
%Srivatsan Ravi\\
%\\
%\normalsize TU Berlin/Deutsche Telekom Laboratories
%%\thanks{TU Berlin/ Deutsche Telekom Laboratories
%%FG INET, Sekr. TEL 16, Ernst-Reuter-Platz 7, 10587 Berlin; Email: srivatsan.ravi@net.t-labs.tu-berlin.de}

%\institute{Deutsche Telekom Laboratories/TU Berlin}

%%%%%%%%%%%%%%%%%%%%%%%%%%%%%%%%%%%%%%%%%%%%%%%%%%%%%%%%%%%%%%%%%%%%%%%%%%%%%%%%
\date{}
\maketitle

\begin{abstract}
Transactional memory allows the user to declare sequences of
instructions as speculative \emph{transactions} that can either \emph{commit}
or \emph{abort}.
If a transaction commits, it appears to be executed sequentially,
so that the committed transactions constitute a correct
sequential execution.
If a transaction aborts, none of its instructions can affect
other transactions.

The popular criterion of \emph{opacity} requires that
the views of aborted transactions must also be consistent with the global
sequential order constituted by committed ones.
This is believed to be important, since inconsistencies observed by
an aborted transaction may cause a fatal irrecoverable error or waste
of the system in an infinite loop.
Intuitively, an opaque implementation must ensure that no intermediate view a
transaction obtains before it commits or aborts can be
affected by a transaction that has not started committing yet, so
called \emph{deferred-update} semantics.

In this paper, we intend to grasp this intuition formally.
We propose a variant of opacity that explicitly requires the
sequential order to respect the deferred-update semantics.
%We show that our criterion is a safety property, i.e., it is prefix-
%and limit-closed.
Unlike opacity, our property also ensures that a serialization of a
history implies serializations of its prefixes.
Finally, we show that our property is equivalent to opacity if we assume that
no two transactions commit identical values on the same variable, and present a
counter-example for scenarios when the ``unique-write'' assumption
does not hold.
\end{abstract}

%\begin{center}
%Regular paper, eligible for the best student paper award (Srivatsan Ravi is a full-time student).
%\end{center}
%
%\thispagestyle{empty}
%\clearpage
\pagenumbering{arabic}

%%%%%%%%%%%%%%%%%%%%%%%%%%%%%%%%%%%%
%%%%%%%%%%%%%%%%%%%%%%%%%%%%%%%%%%%%
%\section{Introduction}
%
\section{Introduction}
\label{sec:intro}
Resolving conflicts in an efficient and consistent manner is the most
challenging task in concurrent software design.
Transactional memory (TM)~\cite{HM93,ST95}
addresses this challenge by offering an interface in which  sequences
of shared-memory instructions can be declared as speculative \emph{transactions}.
The underlying idea, borrowed from databases, is to treat each
transaction as an atomic event: a  transaction may either \emph{commit} in
which case it appears as executed sequentially, or \emph{abort} in
which case none of its update instructions affect other transactions.
The user can therefore design software having only sequential
semantics in mind and let the memory take care of
conflicts resulting from potentially concurrent executions.

In databases, a correct implementation of concurrency control
should guarantee that committed transactions constitute a
serial (or sequential) execution~\cite{Had88}.
On the other hand, uncommitted transactions can be aborted without
invalidating the correctness of committed ones.
(In the literature on databases, the latter feature is called
\emph{recoverability}.)
%, and it can be implemented by
%rolling back all the updates performed by the aborted transaction.

%\petrC{A better link here?}\hagitC{Yes}
In the TM context,
%traditional database conditions
%may not be sufficient for ensuring system correctness,
%because
intermediate states witnessed by an incomplete transaction
may affect the application through the outcome of its read operations.
%\hagitC{DB also has a condition that requires live transactions
%to see consistent views,
%though not exactly the same as opacity.}
%\petrC{We do not say DB does not have it, what we say is that
%recoverability is not enough for TM. Is it confusing?}
If the intermediate state is not consistent with any sequential
execution, the application may experience a fatal irrecoverable error or sink
in an infinite loop.
The correctness criterion of
\emph{opacity}~\cite{GK08-opacity,tm-book}
addresses this issue by
requiring the states observed by uncommitted transactions to
be consistent with a global
serial execution constituted by committed ones
(a \emph{serialization}).

An opaque TM implementation  must, intuitively, ensure that no
transaction can read from a transaction that has not started
committing yet.
This is usually referred to as the \emph{deferred-update}
semantics, and it was in fact explicitly required in some representations of
opacity~\cite{GHS08-permissiveness}.
The motivation of this paper is to capture this intuition formally.

We present a new correctness criterion called \emph{du-opacity}.
Informally, a du-opaque (possibly, non-serial) execution must be
indistinguishable from a totally-ordered execution, with respect to which
%\hagitC{in which???}
%\petrC{I think it should be ``with respect to which''}
no transaction reads from a transaction that has not started committing.

We further check if our correctness criterion is a \emph{safety
property}, as defined by Owicki and Lamport~\cite{OL82},
Alpern and Schneider~\cite{AS85} and refined by Lynch~\cite{Lyn96}.
We show that du-opacity is \emph{prefix-closed}:
every prefix of a du-opaque history is also du-opaque.
We also show that du-opacity is,
under certain restrictions, \emph{limit-closed}.
More precisely, assuming that, in an infinite execution, every transaction
completes each of the operations it invoked (but possibly neither commits nor aborts),  the infinite limit of any sequence
of ever extending du-opaque histories is also du-opaque.
To prove that such an implementation is du-opaque, it is thus
sufficient to prove that all its \emph{finite} histories are du-opaque.
To the best of our knowledge, this paper contains the first non-trivial proof of
limit-closure for a TM correctness property.
We further show that any du-opaque serialization of a history implies
a serialization of any of its prefixes that maintains the original read-from
relations, which is instrumental in the comparison of du-opacity with opacity.

Opacity, as defined in~\cite{tm-book},
reduces correctness of an infinite history to correctness of all
its prefixes, and thus is limit-closed by definition.
In fact, we show that extending opacity to infinite histories in a
non-trivial way (i.e., requiring that even infinite histories should have proper
serializations), does not result in a limit-closed property.
We observe that opacity does not preclude
scenarios in which a transaction
reads from a future transaction (cf. example in
Figure~\ref{fig:lin-example}), and, thus, our
criterion is strictly stronger than opacity.
Surprisingly, this is true even if we assume that
all transactional operations are atomic, which somewhat attenuates
earlier attempts to forcefully introduce the deferred-update
%semantics previously used in the definition of
in the definition of opacity for atomic operations~\cite{GHS08-permissiveness}.
However, we show that opacity and du-opacity are equivalent
if we assume that no two transactions try to
commit identical values on the same data item.

We believe that these results improve our understanding of the very
notion of correctness in transactional memory. Our correctness
criterion
explicitly declares that a transaction is not allowed to
read from a transaction that has not started
committing yet, and we conjecture
that it is simpler to verify. We present the first non-trivial proof for both limit-
and prefix-closure of TM histories, which is quite
interesting in its own right, for it enables reasoning about possible
serializations of an infinite TM history based on serializations of
its prefixes.

The paper is organized as follows.
In Section~\ref{sec:modeltm}, we introduce our basic model definitions
and recall the notion of safety~\cite{OL82,AS85,Lyn96}.
In Section~\ref{sec:kr}, we define our criterion of du-opacity and
%in Section~\ref{sec:saf}, we
show that it is prefix-closed and under certain restrictions, a limit-closed property.
In Section~\ref{sec:gkopacity}, we prove that du-opacity
is a proper subset of the original notion of opacity~\cite{tm-book}, and that it
coincides with du-opacity under the ``unique-writes'' condition.
%
%\input{intro}
%
%%%%%%%%%%%%%%%%%%%%%%%%%%%%%%%%%%%%
\section{Model}
\label{sec:modeltm}
A \emph{transactional memory} (in short, \emph{TM})
supports atomic \emph{transactions} for
reading and writing on a set of \emph{transactional} objects
(in short, \emph{t-objects}).
A transaction is a sequence of accesses (reads or writes) to t-objects;
each transaction $T_k$ has a unique identifier $k$.

A transaction $T_k$ may contain the following \emph{t-operations},
each being a matching pair of \emph{invocation} and \emph{response} events:
\begin{enumerate}
\item \textit{read}$_k(X)$ returns a value in some domain $V$
or a special value $A_k\notin V$ (\emph{abort});
\item \textit{write}$_k(X,v)$, for a value $v \in V$,
returns \textit{ok}$_k$ or $A_k$;
\item \textit{tryC}$_k$ returns $C_k\notin V$ (\emph{commit}) or $A_k$; and
\item \textit{tryA}$_k$ returns $A_k$.
\end{enumerate}

The \emph{read set} (resp., the \emph{write set}) of a transaction $T_k$,
denoted $\Rset(T_k)$, is the set of t-objects that $T_k$ reads in $H$;
the \emph{write set} of $T_k$, denoted $\Wset(T_k)$,
is the set of t-objects $T_k$ writes to in $H$.
%The \emph{data set} of $T_k$ is $\Dset(T_k)=\Rset(T_k)\cup\Wset(T_k)$.

%\vspace{1mm}\noindent\textit{Histories.}
We consider an asynchronous shared-memory system in which
processes communicate via transactions.
A TM \emph{implementation} provides processes with algorithms
for implementing $\Read_k$, $\Write_k$, $\TryC_k()$ and $\TryA_k()$
of a transaction $T_k$.

%\vspace{1mm}\noindent\textit{Histories.}

A \emph{history} of a TM implementation is a (possibly infinite) sequence of
invocation and response \emph{events} of t-operations.
%, let $<_H$
%denote the total order on the events in $H$.

For every transaction identifier $k$,
$H|k$ denotes the subsequence of $H$ restricted to events of
transaction $T_k$.
%Let $\ms{inv}(op_m)$ denote the invocation of some t-operation $op_m$.
If $H|k$ is non-empty,
%and $H|k \neq \ms{inv}(op_m) \cdot A_k \vee \TryC_k() \cdot C_k$,
we say that $T_k$ \emph{participates} in $H$,
and let $\txns(H)$ denote the set of transactions that participate in $H$.
In an infinite history $H$, 
we assume that each $T_k\in \txns(H)$, $H|k$ is finite;
i.e., transactions do not issue an infinite number of t-operations.

Two histories $H$ and $H'$ are \emph{equivalent} if $\txns(H) = \txns(H')$
and for every transaction $T_k \in \txns(H)$, $H|k=H'|k$.

A history $H$ is \emph{sequential} if every invocation of
a t-operation is either the last event in $H$ or
is immediately followed by a matching response.

A history is \emph{well-formed} if for all $T_k$, $H|k$ is
sequential and has no events after $A_k$ or $C_k$.
We assume that all histories are well-formed, i.e.,
the client of the transactional memory never invokes a t-operation
before receiving a response from the previous one and
does not invoke any t-operation $op_k$ after receiving $C_k$ or $A_k$.
We also assume, for simplicity, that the client invokes a $\Read_k(X)$
%(resp. $\Write_k(X)$)
at most once within a transaction $T_k$.
%if it has previously invoked a $\Read_k(X)$ (resp. $\Write_k(X)$).
This assumption incurs no loss of generality,
since a repeated read can be assigned to return a previously returned
value without affecting the history's correctness.

A transaction $T_k\in \txns(H)$ is \emph{complete in $H$} if
$H|k$ ends with a response event.
The history $H$ is \emph{complete} if all transactions in $\txns(H)$
are complete in $H$.

A transaction $T_k\in \txns(H)$ is \emph{t-complete} if $H|k$
ends with $A_k$ or $C_k$; otherwise, $T_k$ is \emph{t-incomplete}.
$T_k$ is \emph{committed} (resp., \emph{aborted}) in $H$
if the last event of $T_k$ is $C_k$ (resp., $A_k$).
The history $H$ is \emph{t-complete} if all transactions in
$\txns(H)$ are t-complete.

For t-operations $op_k, op_j$, we say that $op_k$ \emph{precedes}
$op_j$ in the \emph{real-time order} of $H$,
denoted $op_k\prec_H^{RT} op_m$,
if the response of $op_k$ precedes the invocation of $op_j$.

Similarly, for transactions $T_k,T_m\in \txns(H)$, we say that $T_k$ \emph{precedes}
$T_m$ in the \emph{real-time order} of $H$, denoted $T_k\prec_H^{RT} T_m$,
if $T_k$ is t-complete in $H$ and
the last event of $T_k$ precedes the first event of $T_m$ in $H$.
If neither $T_k\prec_H^{RT} T_m$ nor $T_m\prec_H^{RT} T_k$,
then $T_k$ and $T_m$ \emph{overlap} in $H$.
A history $H$ is \emph{t-sequential} if there are no overlapping
transactions in $H$.

For simplicity of presentation, we assume that each history $H$
begins with an ``imaginary'' transaction $T_0$ that writes initial
values to all t-objects and commits before any other transaction
begins in $H$.

Let $H$ be a %t-complete
t-sequential history.
For every operation $\Read_k(X)$ in $H$,
we define the \emph{latest written value} of $X$ as follows:
\begin{enumerate}
\item If $T_k$ contains a $\Write_k(X,v)$ preceding $\Read_k(X)$,
then the latest written value of $X$ is the value of the latest such write to $X$.
\item Otherwise, if $H$ contains a $\Write_m(X,v)$,
$T_m$ precedes $T_k$, and $T_m$ commits in $H$,
then the latest written value of $X$ is the value
of the latest such write to $X$ in $H$.
(This write is well-defined since $H$ starts with $T_0$ writing to
all t-objects.)
%such that $T_m \prec_H T_k$, and there does not exist
%any committed transaction $T_i$ that performs $\Write_i(X,v');v' \neq v$ in $H$ such that $T_m \prec_H T_i \prec_H T_k$, then the latest written value of $X$ is the value of the earliest such write to $X$ in $H$.
%$T_m$ precedes $T_k$, and $T_m$ commits in $H$, then
%the latest written value of $X$ is the value of the latest such write to $X$ in $H$.
%\item Otherwise, the latest written value of $X$ is the initial value of $X$.
\end{enumerate}
We say that $\Read_k(X)$ is \emph{legal} in a t-sequential history $H$ if it returns the
latest written value of $X$, and $H$ is \emph{legal}
if every $\Read_k(X)$ in $H$ that does not return $A_k$ is legal in $H$.
%returns the latest written value of $X$.

\ignore{
\paragraph{Read-from relation:}
Let $S$ be a legal t-complete t-sequential history.
We say that an operation $\Read_k(X)$ that returns $v$ in $S$
\emph{reads from a transaction}  $T_m$ if $T_m$ is the
earliest committed transaction in $S$ such that (1) $T_m$ contains
$\Write_m(X,v)$, and (2) there does not exist any committed
transaction $T_i$ in $H$ such that $\TryC_m() \prec_H^{RT} \TryC_i() \prec_H^{RT}
\Read_k(X)$ that performs $\Write_i(X,v')$; $v' \neq v$.
Note that, since $S$ is legal, such a transaction $T_m$ exists for
every read operation ($T_m$ can be the initializing transaction $T_0$.

Let $H$ be any history, $S$ be a legal t-complete t-sequential
history such that $\txns(S)\subseteq\txns(H)$, and $T_k$ be a
transaction in $\txns(S)$.
We say that the read-from relation in $S$ \emph{respects the
deferred-update order with respect to $T_k$} if
for each $\Read_k(X)$ in $S$ and $T_m\in \txns(S)$, such that
$\Read_k(X)$ reads from $T_m$, we have $T_k\nprec_H^{DU}T_m$.
Intuitively, this means that no read operation in $T_m$ reads from an
incomplete or aborted transaction

\begin{lemma}
If no  two transactions write the same value in $H$ and the read-from
relation of $S$ respects , then
$\prec$
\end{lemma}

%\section{The two faces of TM safety}
%\label{sec:safetytm}
%
%\subsection{Formal notion of Safety}
%\label{sec:safetyf}

\paragraph{Safety properties}
}

%\vspace{1mm}
%\noindent
%\textit{Safety properties.}
\begin{definition}[\cite{AS85,Lyn96}]
\label{def:pc}
A \emph{property} $\mathcal{P}$ is a set of (transactional) histories.
A property $\mathcal{P}$ is a \emph{safety} property if it satisfies:
%the following two conditions:
\begin{enumerate}
\item \emph{Prefix-closure:}
every prefix $H'$ of a history $H \in \mathcal{P}$
is also in $\mathcal{P}$ and
\item \emph{Limit-closure:}
for any infinite sequence of finite histories
$H^0,H^1,\ldots $ such that for all $i$, $H^i \in \mathcal{P}$ and $H^i$ is a prefix of $H^{i+1}$,
the infinite history that is the \emph{limit} of the sequence is also in $\mathcal{P}$.
\end{enumerate}
\end{definition}
Notice that the set of histories produced by a TM implementation $M$
is prefix-closed. Therefore, every infinite history of $M$
is the limit of an infinite sequence of ever-extending finite
histories of $M$. Thus, to prove that $M$ satisfies a safety property
$P$, it is enough to show that all finite histories of $M$ are in
$P$. Indeed, limit-closure of $P$ then implies that every infinite
history of $M$ is also in $P$.

\section{DU-Opacity}
\label{sec:kr}
In this section, we introduce our correctness criterion,
\emph{du-opacity},
and prove that a restriction of it is a limit-closed property.
\begin{definition}
\label{def:comp}
Let $H$ be any history. A \emph{completion of $H$}, denoted ${\bar H}$,
is a history derived from $H$ as follows:
\begin{itemize}
\item for every incomplete t-operation $op_k$ of $T_k \in \txns(H)$ in $H$,
if $op_k=\Read_k \vee \Write_k\vee \TryA_k()$, 
insert $A_k$ somewhere after the invocation of $op_k$; 
otherwise, if $op_k=\TryC_k()$, 
insert $C_k$ or $A_k$ somewhere after the last event of $T_k$.
\item for every complete transaction $T_k \in \txns(H)$ that is not t-complete, insert $\textit{tryC}_k\cdot A_k$ after the last event of transaction $T_k$.
\end{itemize}
\end{definition}
%
%Now we define our correctness criterion.
%We begin with defining what it means for a transaction to read from
%another transaction in a t-sequential legal history.
%
Let $H$ be any history and $S$ be a legal t-complete
t-sequential history that is equivalent to some completion of $H$.
Let $<_S$ be the total order on %t-operations
transactions in $S$.

%Let $\Read_k^n(X)$ denote the $n^{th}$ t-read of $X$ performed by transaction $T_k$.
For any $\Read_k(X)$ that does not return $A_k$, let $S^{k,X}$ denote the prefix of $S$ up to the response of $\Read_k(X)$ and $H^{k,X}$ denotes the prefix of $H$ up to the response of $\Read_k(X)$.
Let $S_{H}^{k,X}$ denote the subsequence of $S^{k,X}$ derived by
removing from $S^{k,X}$ the events of all transactions $T_m \in
\txns(H)$ such that $H^{k,X}$ does not contain an invocation of
$\TryC_m()$.
We refer to $S_{H}^{k,X}$ as the \emph{local serialization} 
for $\Read_k(X)$ with respect to $H$ and $S$.
%In the rest of the paper, when we refer to a specific $\Read_k(X)$ and $S_{H}^{k,X}$, we omit the variable $n$ for succinctness.

%Consider a read operation $\Read_k(X)$ in $S$ that does not abort,
%i.e., which returns a value $v \in V$;
%its \emph{read-from transaction}, denoted $\rho_S(\Read_k(X))$,
%is defined as follows:
%\begin{enumerate}
%\item[(1)]
%If there is $\Write_k(X,v)$ performed by $T_k$ that is the latest write
%in $T_k$ such that $\Write_k(X,v) <_S \Read_k(X)$,
%then $\rho_S(\Read_k(X))=T_k$.
%I.e., if the same transaction writes to $X$, then $\Read_k(X)$
%is mapped to $T_k$.
%\item[(2)]
%Otherwise, $\rho_S(\Read_k(X))=T_m$,
%where $T_m$ is the earliest committed transaction in $S$
%that performs $\Write_m(X,v)$ such that $T_m <_S T_k$, and
%there is no committed transaction $T_i$ in $S$
%that performs $\Write_i(X,v')$, $v' \neq v$,
%such that $T_m <_S T_i <_S T_k$.
%I.e., $\Read_k(X)$ is mapped to the earliest committed transaction
%writing the value it read, which is not overwritten.
%\end{enumerate}
We are now ready to present our correctness condition, \emph{du-opacity}.
\begin{definition} %[Kuznetsov and Ravi~\cite{KR11}]
\label{def:opacityKR}
A history $H$ is \emph{du-opaque} if there is
a legal t-complete t-sequential history $S$ such that
\begin{enumerate}
\item[(1)] there exists a completion of $H$ that is equivalent to $S$, and
\item[(2)] for every pair of transactions $T_k,T_m \in \txns(H)$, if $T_k \prec_H^{RT} T_m$,
then $T_k <_S T_m$, i.e., $S$ respects the real-time ordering of
transactions in $H$, and
%for every $\Read_k(X)$ and its \emph{read-from transaction} $T_m$ in $S$, if
%$T_k \prec_{H}^{DU} T_m$ then, $T_k$ must precede $T_m$ in $S$, and
\item[(3)] each $\Read_k(X)$ in $S$ that does not return $A_k$ is
  legal in $S_{H}^{k,X}$.
% where $S_{H}^{k,X}$ is the local serialization for $\Read_k(X)$ with respect to $H$ in $S$.
%\item[(3)] for each $\Read_k(X)$ in $H$ that does not return $A_k$ such that $\rho_S(\Read_k(X))=T_m$; $\Read_k(X) \not\prec_{H}^{RT} \TryC_m()$.
%\item[(4)] if $\Read_k(X) \prec_H^{RT} T_m$, then for any $\Read_m(X)$, $\rho_{S}(\Read_k(X))=\rho_{S}(\Read_m(X))$ or $\rho_{S}(\Read_k(X)) <_S \rho_{S}(\Read_m(X))$.
\end{enumerate}
We then say that $S$ is a (du-opaque) \emph{serialization} of $H$.
Let $\ms{seq}(S)$ denote the \emph{sequence of transactions} in $S$
and $\ms{seq}(S)[k]$ denote the $k^{th}$ transaction in this sequence.
\end{definition}
Informally, a history $H$ is du-opaque if 
there exists a legal t-sequential history $S$  that is equivalent to $H$, respects the 
real-time ordering of transactions in $H$ and every t-read is legal in
its local serialization with respect to $H$ and $S$.
The third condition reflects the implementation's deferred-update semantics, i.e., 
the legality of a t-read in a serialization does not depend on transactions that start committing after the response of the t-read.

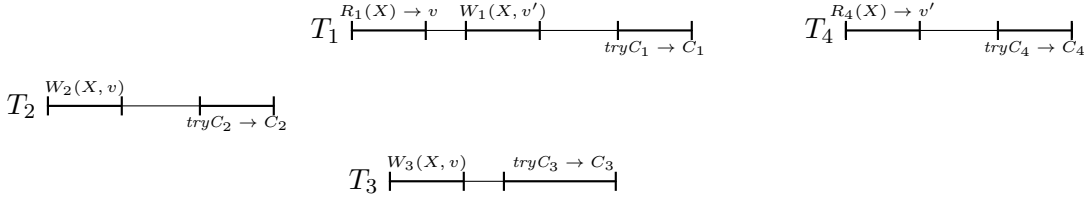
\begin{figure*}[t]
\begin{tikzpicture}
\node (r1) at (4.5,-1) [] {};
\node (w1) at (6,-1) [] {};
\node (c1) at (8,-1) [] {};

\node (r4) at (11,-1) [] {};
\node (c4) at (13,-1) [] {};

\node (w2) at (0.5,-2) [] {};
\node (c2) at (2.5,-2) [] {};

\node (w3) at (5,-3) [] {};
\node (c3) at (6.8,-3) [] {};

\draw (r1) node [above] {\tiny {$R_1(X)\rightarrow v$}};
\draw (w1) node [above] {\tiny {$W_1(X,v')$}};
\draw (c1) node [below] {\tiny {$\TryC_1 \rightarrow C_1$}};

\draw (r4) node [above] {\tiny {$R_4(X) \rightarrow v'$}};
\draw (c4) node [below] {\tiny {$\TryC_4 \rightarrow C_4$}};

\draw (w2) node [above] {\tiny {$W_2(X,v)$}};
\draw (c2) node [below] {\tiny {$\TryC_2 \rightarrow C_2$}};

\draw (w3) node [above] {\tiny {$W_3(X,v)$}};
\draw (c3) node [above] {\tiny {$\TryC_3 \rightarrow C_3$}};

\begin{scope}
\draw [-,thin] (4,-1) node[left] {$T_1$} to (8.5,-1);
\draw [|-|,thick] (4,-1) node[left] {} to (5,-1);
\draw [|-|,thick] (5.5,-1) node[left] {} to (6.5,-1);
\draw [|-|,thick] (7.5,-1) node[left] {} to (8.5,-1);

\draw [|-|,thin] (10.5,-1) node[left] {$T_4$} to (13.5,-1);
\draw [|-|,thick] (10.5,-1) node[left] {} to (11.5,-1);
\draw [|-|,thick] (12.5,-1) node[left] {} to (13.5,-1);
\end{scope}  
\begin{scope}
\draw [-,thin] (0,-2) node[left] {$T_2$} to (3,-2);
\draw [|-|,thick] (0,-2) node[left] {} to (1,-2);
\draw [|-|,thick] (2,-2) node[left] {} to (3,-2);
\end{scope}  

\begin{scope}
\draw [-,thin] (4.5,-3) node[left] {$T_3$} to (7.5,-3);
\draw [|-|,thick] (4.5,-3) node[left] {} to (5.5,-3);
\draw [|-|,thick] (6,-3) node[left] {} to (7.5,-3);
\end{scope}  
\end{tikzpicture}
\caption{A du-opaque history $H$;
there exists a serialization $S$ of $H$ such that each t-read in $S$ has a legal local serialization with respect to $H$ and $S$}
\label{fig:rf}
\end{figure*}
An example of a du-opaque history $H$ is presented in Figure~\ref{fig:rf}.
Let $S$ be the t-complete t-sequential history such that
$\ms{seq}(S)=T_2,T_3,T_1,T_4$ and $S$ is equivalent to $H$ ($H$ is its own completion).
It is easy to see that $S$ is legal and respects
the real-time order of transactions in $H$.
We now need to prove that each t-read performed in $S$ has a local serialization with respect to $H$ in $S$ that is legal.
Consider $\Read_1(X)$ in $S$; since $T_2$ is t-complete in $H^{1,X}$, it follows that $\Read_1(X)$ is legal in $T_2\cdot \Read_1(X)$ (local serialization for $\Read_1(X)$ with respect to $H$ and $S$).
Similarly, since $T_1,T_2,T_3$ are t-complete in $H^{4,X}$, $\Read_4(X)$ is legal in $T_2 \cdot T_3 \cdot T_1 \cdot \Read_4(X)$ (local serialization for $\Read_4(X)$ with respect to $H$ and $S$)
Thus, $S$ is a du-opaque serialization of $H$.

%%
%
%Now we show that du-opacity has an important property:
%every serialization of a du-opaque history yields
%a serialization for each of its prefixes that preserves the read-from
%relations of the original history intact.
For a history $H$, let $H^i$ be the finite prefix of $H$ of length $i$, i.e., consisting of the first $i$ events of $H$. Now we show a property of du-opaque histories that is going to be
instrumental in the rest of the paper.
\begin{figure*}[t]
\begin{tikzpicture}
\node (w1) at (1,0) [] {};
\node (c1) at (3.5,0) [] {};
\node (r2) at (3.5,-1) [] {};
%\node (c2) at (3.7,-1) [] {};
%\node (w3) at (1.2,-2) [] {};
%\node (r3) at (5,-2) [] {};
\node (r4) at (9.5,-2) [] {};
%\node (a4) at (10,-2) [] {};
\node (r5) at (6,-2) [] {};

\draw (w1) node [above] {\tiny {$W_1(X,1)$}};
\draw (c1) node [above] {\tiny {$\TryC_1$}};
\draw (r2) node [above] {\tiny {$R_2(X) \rightarrow 1$}};
%\draw (c2) node [above] {\tiny {$\TryC_2$}};
%\draw (w3) node [above] {\tiny {$W_3(X,1)$}};
%\draw (r3) node [below] {\tiny {}};
\draw (r4) node [above] {\tiny {$R_i(X) \rightarrow 0$}};
\draw (r5) node [above] {\tiny {$R_3(X) \rightarrow 0$}};
%\draw (a4) node [above] {\tiny {$\TryA_i() \rightarrow A_i$}};
%\draw (r5) node [above] {};

\begin{scope}   
\draw [|-|,thick] (0.5,0) node[left] {} to (1.5,0);
\draw [|-,thick] (2.5,0) node[left] {} to (9.8,0);
\draw [-,thin] (0.5,0) node[left] {$T_1$} to (9.8,0);
\draw [-,dashed] (9.8,0) node[left] {} to (11.8,0);
\end{scope}

\begin{scope}
\draw [|-|,thick] (3,-1) node[left] {$T_2$} to (4,-1);
\draw [-,dashed] (4,-1) node[left] {} to (12,-1);
%\draw [-,thick] (4,-1) node[left] {$T_2$} to (4,-1) ; 
\end{scope}  
\begin{scope}
%\draw [|-|,thick] (1,-2) node[left] {} to (2,-2);
\draw [|-|,thick] (5.5,-2) node[left] {$T_3$} to (7,-2);
\draw [-,dashed] (7.2,-2) node[left] {} to (8.2,-2);
%\draw [|-|,dashed] (6.5,-2) node[left] {$T_i$} to (11,-2);
\draw [|-|,thick] (9,-2) node[left] {$T_i$} to (10.5,-2);
%\draw [|-|,thick] (9,-2) node[left] {} to (11,-2);
%\draw [|-|,dashed] (.5,-2) node[left] {} to (9.5,-2);
\draw [->,dashed] (12,-2) node[above] {$\rightarrow \infty$} to (14,-2);
%\draw [-,thick] (1,-2) node[left] {$T_3$} to (6,-2) ; 
\end{scope}  
\end{tikzpicture}
\caption{Each finite prefix of the history is du-opaque,
but the infinite limit of the ever-extending sequence is not du-opaque}
\label{fig:op-example}
\end{figure*}
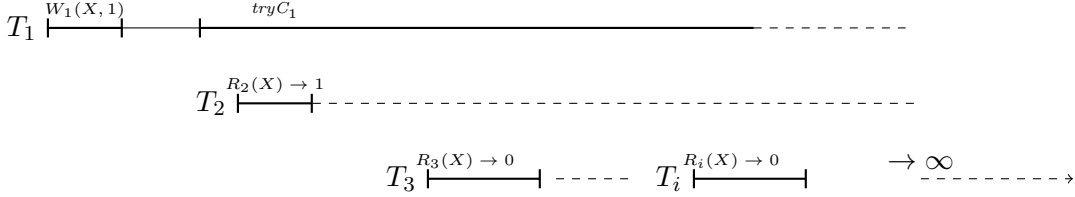
\begin{lemma}
\label{lm:dusep}
Let $H$ be a du-opaque history and $S$ be a serialization of $H$.
For any $i\in \Nat$, there exists a serialization $S^i$ of $H^i$, such that
$\ms{seq}(S^i)$ is a subsequence of $\ms{seq}(S)$.
%(2) for every $\Read_k(X)$ in $H^i$ that does not return $A_k$,
%if $\rho_{S}(\Read_k(X))=T_m$, then $\rho_{S^i}(\Read_k(X))=T_m$.
\end{lemma}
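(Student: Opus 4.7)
The plan is to build $S^i$ as the subsequence of $S$ containing exactly the transactions in $\txns(H^i)$, where each $T_k$'s events are taken from a completion $\bar{H^i}|k$ of $H^i|k$. Following Definition~\ref{def:comp}, the only genuine freedom lies in resolving incomplete $\TryC_k()$'s: I resolve each such tryC by inserting $C_k$ (resp.\ $A_k$) whenever the corresponding transaction in $\bar H$ ends with $C_k$ (resp.\ $A_k$). By construction $S^i$ is t-complete, t-sequential, equivalent to $\bar{H^i}$, and $\ms{seq}(S^i)$ is a subsequence of $\ms{seq}(S)$. Real-time ordering then transfers at no cost: if $T_k \prec_{H^i}^{RT} T_m$ then $T_k$ is t-complete in $H^i$, so by well-formedness $H|k = H^i|k$; hence $T_k \prec_H^{RT} T_m$, so $T_k <_S T_m$ and $T_k <_{S^i} T_m$.

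The remaining, and main, obligation is to certify that every $\Read_k(X)$ in $S^i$ not returning $A_k$ is legal in its local serialization $(S^i)_{H^i}^{k,X}$. My plan is to establish the stronger identity $(S^i)_{H^i}^{k,X} = S_H^{k,X}$, from which legality of the read in $S$ immediately yields legality in $(S^i)_{H^i}^{k,X}$. Since the response of $\Read_k(X)$ lies within $H^i$, the prefixes $(H^i)^{k,X}$ and $H^{k,X}$ coincide; and any transaction $T_m$ whose $\TryC_m()$ is invoked in $H^{k,X}$ automatically belongs to $\txns(H^{k,X}) \subseteq \txns(H^i)$, so the two local serializations retain the same transactions in the same order (inherited from $S$).

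The hard part is then verifying that for each such retained $T_m$ the blocks $\bar{H^i}|m$ and $\bar H|m$ agree. This is a short case analysis on the position of $T_m$'s tryC response. When the response lies in $H^{k,X}$, well-formedness forces $H|m = H^i|m$ and no completion is required, so the two blocks are literally identical. Otherwise, $H^i|m$ and $H|m$ both terminate at $\TryC_m()$'s invocation (any later event in $H|m$ could only be its response, after which $H|m$ must end), and the completion rule appends an independently chosen $C_m$ or $A_m$ to each; the matching convention of the construction forces the two choices to coincide. Since every completion event is appended at the end of a block, the prefix of $T_k$'s own block up to $\Read_k(X)$'s response is also undisturbed, so the blocks agree on everything the local serialization sees. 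Hence $(S^i)_{H^i}^{k,X} = S_H^{k,X}$, the read is legal, and $S^i$ is the required du-opaque serialization of $H^i$.
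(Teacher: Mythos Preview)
Your argument omits a required ingredient: you never verify that $S^i$ is \emph{legal} as a t-sequential history. Definition~\ref{def:opacityKR} asks for a \emph{legal} t-complete t-sequential history satisfying (1)--(3); your proof establishes (1), (2) and (3) but not legality itself. The identity $(S^i)_{H^i}^{k,X}=S_H^{k,X}$ shows only that each read is legal in its \emph{local} serialization, which strips out every transaction whose $\TryC$ has not been invoked by the time of the read. A transaction $T_j$ that is committed in $S^i$, precedes $T_k$ in $S^i$, and writes a conflicting value to $X$, but whose $\TryC_j()$ is invoked in $H^i$ only \emph{after} the response of $\Read_k(X)$, is invisible in the local serialization yet destroys legality of $S^i$.

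This situation is not hypothetical. Take
\[
H=\Write_j(X,v'),\ \ok_j,\ \Read_k(X),\ 0,\ \TryC_j(),\ C_j,\ \Write_m(X,0),\ \ok_m,\ \TryC_m(),\ C_m
\]
with $v'\neq 0$ and $T_k$ left t-incomplete. Then $S=T_0,T_j,T_m,T_k$ is a du-opaque serialization of $H$: it is legal (the last committed write to $X$ before $T_k$ is $T_m$'s $0$), it respects real time ($T_j\prec_H^{RT}T_m$ is the only constraint), and it satisfies (3) because $S_H^{k,X}=T_0\cdot\Read_k(X)$. Now let $H^i$ be the prefix ending at $C_j$. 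Your construction forces $\ms{seq}(S^i)=T_0,T_j,T_k$ with $T_j$ committed, and $\Read_k(X)=0$ is \emph{not} legal in $S^i$ since the latest committed write is $T_j$'s $v'$. The paper's proof, unlike yours, does attempt to argue legality of $S^i$ separately, via the claim that the latest committed $X$-writer $T_m$ in $S$ must have $\TryC_m()$ invoked before the response of $\Read_k(X)$ in $H$; the same example shows that this step is also delicate. Either way, legality of $S^i$ is a genuine additional obligation that your proposal does not discharge.
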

\begin{proof}
Given $H$, $S$ and $H^i$,
we construct a t-complete t-sequential history $S^i$ as follows:
\begin{itemize}
\item for every transaction $T_k$ that is t-complete in $H^i$,  $S^i|k=S|k$.
\item for every transaction $T_k$ that is complete but not t-complete in $H^i$,
    $S^i|k$ consists of the sequence of events in $H^i|k$,
    immediately followed by $\TryC_k()\cdot A_k$.
\item for every transaction $T_k$ with an incomplete
t-operation $op_k=\Read_k \vee \Write_k \vee \TryA_k()$ in $H^i$, $S^i|k$ is the
sequence of events in $S|k$ up to the invocation of $op_k$, immediately
followed by $A_k$.
\item for every transaction $T_k \in \txns(H^i)$ with an incomplete
t-operation $op_k=\TryC_k()$, $S^i|k=S|k$.
\end{itemize}
By the above construction, $S^i$ is indeed a t-complete history 
and every transaction that appears in $S^i$ also appears in $S$.
Now we order transactions in $S^i$ so that $\ms{seq}(S^i)$ is a subsequence of $\ms{seq}(S)$.

Note that $S^i$ is derived from events contained in some completion $\bar H$ of
$H$ that is equivalent to $S$.
Since $S^i$ contains events from every complete t-operation in $H^i$ and
other events included are borrowed from $\bar H$, there exists a completion of $H^i$ that is equivalent to $S^i$.

We now prove that $S^i$ is a serialization of $H^i$.
First we observe that $S^i$ respects the real-time order of $H^i$.
Indeed, if $T_j \prec_{H^i}^{RT} T_k$, then $T_j \prec_{H}^{RT} T_k$
and $T_j <_{S} T_k$. Since $\ms{seq}(S^i)$ is a subsequence of
$\ms{seq}(S)$, we have  $T_j <_{S^i} T_k$.

%Note that since $\ms{seq}(S^i)$ is a prefix of $\ms{seq}(S)$ and $S$ is legal, for any $\Read_k(X)$ in $H^i$, if $\rho_S(\Read_k(X))=T_m$, then $\rho_{S^i}(\Read_k(X))=T_m$.
To show that $S^i$ is legal,
suppose, by way of contradiction, 
that there is some $\Read_k(X)$ that returns $v\neq A_k$ in $H^i$ 
such that $v$ is not the latest written value of $X$ in $S^i$.
If $T_k$ contains a $\Write_k(X,v')$ preceding $\Read_k(X)$ such that
$v\neq v'$ and $v$ is not the latest written value for $\Read_k(X)$
in $S^i$, it is also not the latest written value  for $\Read_k(X)$
in $S$, which is a contradiction.
Thus, the only case to consider is when $\Read_k(X)$ should return a value
written by another transaction.

Since $S$ is a serialization of $H$,
there exists a committed transaction
$T_m$ that performs the last $\Write_m(X,v)$ that precedes
$\Read_k(X)$ in $T_k$ in $S$.
%; $v$ is latest written value of $X$ in $S$ and
Moreover, since $\Read_k(X)$ is legal in the local serialization for
$\Read_k(X)$ in $H$ with respect to $S$,
the prefix of $H$ up to the response of $\Read_k(X)$ must contain an invocation of $\TryC_m()$.
Thus, $\Read_k(X) \not\prec_H^{RT} \TryC_m()$  and $T_m\in \txns(H^i)$.
By construction of $S^i$, $T_m \in \txns(S^i)$ and $T_m$ is committed in $S^i$.

We have assumed, towards a contradiction, that
$v$ is not the latest written value for $\Read_k(X)$  in $S^i$.
Hence, there exists a committed transaction $T_j$ that performs
$\Write_j(X,v');v' \neq v$ in $S^i$ such that $T_m <_{S^i} T_j <_{S^i} T_k$.
%This implies that $\Write_m(X,v)$ is not the \emph{latest write} in
%$S^i$  such that $T_m <_{S^i} T_k$ and $T_m$ is committed in $S^i$.
But this is not possible since $\ms{seq}(S^i)$ is a subsequence of
$\ms{seq}(S)$.

Thus, $S^i$ is a legal t-complete t-sequential history equivalent to
some completion of $H^i$.
Now, by the construction of $S^i$, for every $\Read_k(X)$ that does not
return $A_k$ in $S^i$, we have ${S^i}_{H^i}^{k,X}={S}_{H}^{k,X}$.
Indeed, the transactions that appear before $T_k$ in
${S^i}_{H^i}^{k,X}$ are those with a $\TryC$ event before the
response of  $\Read_k(X)$ in $H$ and are committed in $S$.
Since $\ms{seq}(S^i)$ is a subsequence of $\ms{seq}(S)$, we have
${S^i}_{H^i}^{k,X}={S}_{H}^{k,X}$.
Thus, $\Read_k(X)$ is legal in  ${S^i}_{H^i}^{k,X}$.
%If there is $\Write_k(X,v)$ performed by $T_k$
%such that $\Write_k(X,v) \prec_H^{RT}
%\Read_k(X)$, the claim is immediate.
%Suppose no such $\Write_k(X,v)$ is performed by $T_k$.
%From arguments used in (2), it follows that
%there exists a committed transaction
%$T_m$ that performs $\Write_m(X,v)$ such that
%$T_m <_S T_k$; $v$ is latest written value of $X$ in $S$ and $\Read_k(X) \not\prec_H^{RT} \TryC_m()$.
%Thus, ${H^i}^{k,X}$ must contain the invocation of $\TryC_m()$ and since ${S^i}_{H^i}^{k,X}$ is a subsequence of $S$,
%$\Read_k(X)$ is legal in ${S^i}_{H^i}^{k,X}$.
%%
%%We can construct a serialization $S'$ of $H'$ from $S$ such that $\ms{seq}(S')$ is a subsequence of $\ms{seq}(S)$ as described in the proof of Theorem~\ref{th:pc}.
\end{proof}
%
%\subsection{DU-Opacity: Safety properties}
%\label{sec:saf}
%
Lemma~\ref{lm:dusep} implies that every prefix of a
du-opaque history has a du-opaque serialization and thus:
\begin{corollary}
\label{cr:pc}
DU-Opacity is a prefix-closed property.
\end{corollary}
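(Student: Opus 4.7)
The plan is to derive the corollary essentially for free from Lemma~\ref{lm:dusep}. Recall that a history $H'$ is du-opaque precisely when it admits some serialization in the sense of Definition~\ref{def:opacityKR}. So to establish prefix-closure, I would pick an arbitrary du-opaque history $H$, fix any serialization $S$ of $H$ (which exists by du-opacity), and then take an arbitrary prefix $H'$ of $H$.

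Next I would write $H' = H^i$ where $i$ is the number of events in $H'$, so that $H'$ coincides with the finite prefix of $H$ of length $i$ used in the statement of Lemma~\ref{lm:dusep}. Applying the lemma to $H$, $S$, and this index $i$ produces a history $S^i$ which is a serialization of $H^i = H'$. By definition this already witnesses that $H'$ is du-opaque, which is exactly prefix-closure.

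A small subtlety worth making explicit in the write-up is the case of infinite $H$: the definition of prefix-closure (Definition~\ref{def:pc}) quantifies over \emph{all} prefixes, including both finite prefixes of finite histories and finite prefixes of infinite histories. Lemma~\ref{lm:dusep} is stated for arbitrary $i \in \Nat$ and does not require $H$ itself to be finite, so the same argument covers both situations. I do not expect any real obstacle here; the entire weight of the argument has already been carried by Lemma~\ref{lm:dusep}, and the corollary is just an unpacking of definitions together with the observation that ``being a prefix of $H$'' is the same as ``being $H^i$ for some $i$.''
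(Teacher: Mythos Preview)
Your proposal is correct and matches the paper's own reasoning exactly: the paper also derives the corollary as an immediate consequence of Lemma~\ref{lm:dusep}, with no additional argument beyond the observation that the lemma supplies a serialization for every finite prefix $H^i$. Your explicit remark about infinite $H$ is a harmless clarification the paper leaves implicit.
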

We show, however,  that du-opacity is, in general, not limit-closed.
We present an infinite history that is not du-opaque,
but every its prefix is.
\begin{proposition}\label{lm:lmn}
DU-Opacity is not a limit-closed property.
\end{proposition}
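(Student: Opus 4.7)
The plan is to construct an infinite history $H$ along the lines of Figure~\ref{fig:op-example}: transaction $T_1$ executes $\Write_1(X,1)$ and then invokes $\TryC_1$ without ever receiving a response; transaction $T_2$ executes $\Read_2(X)\to 1$ and no further events; and for each $i\ge 3$, transaction $T_i$ executes only $\Read_i(X)\to 0$. I would arrange the events so that the invocation of $\TryC_1$ precedes the response of $\Read_2(X)$, and each $\Read_{i+1}(X)$ is invoked strictly after the response of $\Read_i(X)$. Note that no transaction in $H$ is t-complete, so every real-time precedence in $H$ is vacuous.

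For each finite prefix $H^n$, let $j$ be the largest index such that $T_j$'s events appear in $H^n$. I would propose the serialization whose transaction order is $T_3,\ldots,T_j,T_1,T_2$, completing $H^n$ by inserting $C_1$ after the invocation of $\TryC_1$ and $\TryC_k\cdot A_k$ after the last event of every other transaction. Legality in local serializations is easy to verify: for any $i\ge 3$, no $\TryC_k$ with $k\ge 3$ is invoked in $H^n$, so $S_H^{i,X}$ contains no transaction before $\Read_i(X)$ and the initial value $0$ is indeed the latest written; and for $\Read_2(X)\to 1$, the invocation of $\TryC_1$ lies in $H^{2,X}$, so $S_H^{2,X}$ reduces to $T_1\cdot\Read_2(X)$, which is legal because $T_1$ is committed in the completion and performs $\Write_1(X,1)$.

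Next, I would show that $H$ itself has no du-opaque serialization. Assume, for contradiction, that $S$ is one, equivalent to a completion $\bar H$. In $\bar H$, $T_1$ is either committed or aborted. If $T_1$ is aborted, then no transaction in $S_H^{2,X}$ is both committed and writes to $X$ (since none of the $T_i$ writes to $X$), so legality forces $\Read_2(X)$ to return the initial value $0$, contradicting its return value $1$. If $T_1$ is committed, then for every $i\ge 3$, legality of $\Read_i(X)\to 0$ forces $T_i<_S T_1$: the invocation of $\TryC_1$ lies in $H^{i,X}$, so whenever $T_1<_S T_i$, the transaction $T_1$ belongs to $S_H^{i,X}$ and its committed write of $1$ becomes the latest written value of $X$ before $\Read_i(X)$, a contradiction.

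The main obstacle will be the final step: ruling out the remaining case in which $T_1$ commits and is preceded in $<_S$ by all the infinitely many $T_i$, $i\ge 3$. The key observation is that a history is a sequence of events indexed by the natural numbers and each transaction contributes at least one event; consequently the total order $<_S$ on the transactions of $S$ has order type at most $\omega$, and no transaction can have infinitely many $<_S$-predecessors. This contradicts the requirement $T_i<_S T_1$ for all $i\ge 3$, so $H$ is not du-opaque. Combined with the prefix argument, this will establish that du-opacity fails to be limit-closed.
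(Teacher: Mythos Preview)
Your proposal is correct and follows essentially the same approach as the paper: the same infinite history $H$ (up to the inessential choice of making the $T_i$'s sequential rather than overlapping, which is harmless since none are t-complete), the same serialization $T_3,\ldots,T_j,T_1,T_2$ for the finite prefixes, and the same finiteness argument showing that $T_1$ cannot be preceded by all of the infinitely many $T_i$ in any serialization of $H$. Your write-up is in fact a bit more careful than the paper's in two places: you explicitly handle the case where $T_1$ is aborted in the completion (the paper leaves this implicit), and your order-type argument is cleaner than the paper's ``look at index $n+1$'' phrasing, which neglects the possibility that $T_2$ sits at that index.
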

\begin{proof}
Let $H^j$ denote a finite prefix of $H$ of length $j$.
Consider an infinite history $H$ that is the limit of the
histories $H^j$ defined as follows (see Figure~\ref{fig:op-example}):
\begin{itemize}
\item[--]
Transaction $T_1$ performs a $\Write_1(X,1)$ and then
invokes $\TryC_1()$ that is incomplete in $H$.
\item[--]
Transaction $T_2$ performs a $\Read_2(X)$ that overlaps with $\TryC_1()$
and returns $1$.
\item[--]
There are infinitely many transactions $T_i$, $i\geq 3$,
each of which performing a single $\Read_i(X)$ that returns $0$
such that each $T_i$ overlaps with both $T_1$ and $T_2$.
\end{itemize}

A t-complete t-sequential history $S^j$ is derived from the
sequence $T_3, \ldots, T_j,T_0,T_1$ in which
(1) $\TryC_1()$ is completed by inserting $C_1$
immediately after its invocation
and (2) any incomplete $\Read_j(X)$ is completed by inserting $A_j$ immediately after its invocation.
It is easy to observe that $S^j$ is indeed a serialization of $H^j$.

However, there is no serialization of $H$.
Suppose that such a serialization $S$ exists.
Since every transaction that participates in $H$ must participate in $S$,
there exists $n \in \Nat$ such that $\ms{seq}(S)[n]=T_1$.
Consider the transaction at index $n+1$, say $T_i$ in $\ms{seq}(S)$.
But for any $i\geq 3$, $T_i$ must precede $T_1$ in any serialization (by legality),
which is a contradiction.
\end{proof}
We next prove that du-opacity is limit-closed if we assume that, in
an infinite history, every transaction eventually completes (but not
necessarily t-completes).

The proof uses K\"{o}nig's Path Lemma on a rooted directed graph, $G$.
Let $v_0$ be the root vertex of $G$.
We say that $v_k$, a vertex of $G$, is \emph{reachable} from $v_0$,
if there is a sequence of vertices $v_0 \ldots, v_k$ such that for each $i$,
there exists an edge from $v_{i}$ to $v_{i +1}$.
%The \emph{level} of a vertex $v$ in $G$ is the \emph{length of the longest walk} from the root to $v$.
%We say that $G$ is \emph{connected} if for each level $i$, there exists a vertex \emph{reachable} from the \emph{root} of $G$.
$G$ is \emph{connected} if every vertex in $G$ is reachable from $v_0$.
$G$ is \emph{finitely branching} if every vertex in $G$ has a finite out-degree.
$G$ is \emph{infinite} if the set of vertices in $G$ is not finite.
\begin{lemma}[K\"{o}nig's Path Lemma~\cite{konig}]
\label{lm:konig}
If $G$ is an infinite connected finitely branching rooted directed graph,
then $G$ contains an infinite sequence of vertices $v_0,v_1, \ldots $
such that $v_0$ is the \emph{root},
for every $i \geq 0$, there is an edge from $v_i$ to $v_{i+1}$,
and for every $i \neq j$, $v_i \neq v_j$.
\end{lemma}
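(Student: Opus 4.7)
The plan is to reduce the statement to the classical tree form of K\"onig's lemma and then prove the tree form by a pigeonhole induction. First I would construct a spanning arborescence $T$ of $G$ rooted at $v_0$: since every vertex is reachable from $v_0$, for each non-root vertex $v$ I can fix a shortest $v_0$-to-$v$ path in $G$ and define its parent $p(v)$ to be the penultimate vertex on that path. Taking the edges of $T$ to be $\{(p(v), v) : v \neq v_0\}$ yields a rooted tree with the same vertex set as $G$, hence infinite, every edge of which is an edge of $G$ oriented from parent to child; moreover, the children of any $u$ in $T$ form a subset of the $G$-out-neighbors of $u$, so $T$ is finitely branching.

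Next I would build the path $v_0, v_1, \ldots$ inductively while maintaining the invariant that the subtree $T_{v_i}$ of $T$ rooted at $v_i$ has infinitely many vertices. The base case $i = 0$ holds because $T$ itself is infinite. For the inductive step, $T_{v_i}$ is the disjoint union of $\{v_i\}$ together with the subtrees $T_c$ as $c$ ranges over the children of $v_i$ in $T$; the child set is finite while this union is infinite, so by pigeonhole some child $c$ satisfies $|T_c| = \infty$, and I would set $v_{i+1} := c$. Then $(v_i, v_{i+1})$ is a tree edge and hence an edge of $G$, and distinctness of the $v_j$'s is automatic because they trace a root-to-descendant path in the tree $T$, where no vertex can repeat.

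The main obstacle, and the reason the tree reduction is needed, is the directed cyclic structure of $G$. A naive inductive argument performed directly on $G$ that, at each step, picks an out-neighbor of $v_i$ whose set of $G$-reachable vertices is infinite fails to guarantee distinctness: the chosen out-neighbor could coincide with some earlier $v_\ell$ along a cycle, and the bookkeeping of ``still-available'' vertices breaks the clean pigeonhole step. Committing, once and for all, to a single parent per vertex collapses reachability into acyclic tree descent and restores the pigeonhole argument. One minor formal point is that the recursive selection of each $v_{i+1}$ relies on dependent choice, which is standard in this setting and which I would simply invoke.
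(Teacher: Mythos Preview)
The paper does not prove this lemma at all: it is stated with a citation to K\"onig's original work and used as a black box in the proof of Theorem~\ref{th:lc}. So there is no ``paper's own proof'' to compare against.

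That said, your proposal is a correct and standard proof of the classical result. The reduction to a spanning arborescence via shortest paths is clean: defining $p(v)$ as the penultimate vertex on a fixed shortest $v_0$-to-$v$ path guarantees $d(v_0,p(v)) = d(v_0,v)-1$, so iterating $p$ from any vertex strictly decreases distance and terminates at $v_0$, which shows $T$ really is a tree. Your observation that the $T$-children of $u$ are a subset of the $G$-out-neighbors of $u$ is exactly what transfers the finitely-branching hypothesis. The pigeonhole induction on subtree sizes is the textbook argument, and you are right to flag dependent choice for the sequence of selections. Your diagnosis of why one must pass to a tree --- directed cycles in $G$ would let a naive greedy walk revisit vertices --- is also on point.
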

We first prove the following lemma concerning du-opaque serializations.

For a transaction $T\in \txns(H)$, we define the \emph{live set of $T$ in $H$}, denoted $\ms{Lset}_H(T)$ ($T$ included) as follows:
every transaction $T' \in \txns(H)$ such that neither the last event of $T'$ precedes the first event of $T$ in
$H$ nor the last event of $T$ precedes the first event of $T'$ in $H$ is contained in $\ms{Lset}_H(T)$.
We say that transaction $T'\in \txns(H)$ \emph{succeeds the live set of $T$} and we write $T\prec_H^{LS} T'$ if in $H$, for all $T''\in
\ms{Lset}_H(T)$, $T''$ is complete and the last event of $T''$
precedes the first event of $T'$.
\begin{lemma}
\label{lm:full}
Let $H$ be a finite du-opaque history and assume $T_k \in \txns(H)$ be a complete transaction in $H$ such that every transaction in $\ms{Lset}_H(T_k)$ is complete in $H$.
Then there exists a serialization $S$ of $H$
such that for all $T_k, T_m \in\txns(H)$; $T_k\prec_H^{LS} T_m$, we have $T_k<_S T_m$.
%such that $T <_S T''$, where $T''\in \txns(H)$ is any transaction whose first event succeeds the last event of each $T'\in \ms{Lset}_H(T)$ in $H$.
\end{lemma}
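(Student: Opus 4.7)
The plan is to start from an arbitrary du-opaque serialization $S'$ of $H$ (which exists by the hypothesis that $H$ is du-opaque and Lemma~\ref{lm:dusep}) and rearrange $S'$ into the desired $S$ by moving a single aborted transaction. The central observation is that any transaction $T$ that is complete in $H$ but not t-complete must have its last event in $H$ be a response of a $\Read$ or a $\Write$ (not of $\TryC$), so $T$ never invokes $\TryC$ in $H$. Consequently $T$ is excluded from every local serialization ${S'}_H^{j,X}$ attached to a read of any other transaction, and moving such a $T$ within $S'$ cannot disturb the legality of reads from any transaction other than $T$ itself.

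If the distinguished $T_k$ is already t-complete in $H$, then for every $T_m$ with $T_k \prec_H^{LS} T_m$ one immediately obtains $T_k \prec_H^{RT} T_m$ and hence $T_k <_{S'} T_m$; setting $S = S'$ then suffices. Otherwise $T_k$ is complete but not t-complete, so it is aborted in $S'$ and invisible to all other local serializations. Let $T_m^{\star}$ be the earliest transaction (in the $S'$-order) with $T_k \prec_H^{LS} T_m^{\star}$ and $T_m^{\star} <_{S'} T_k$; if no such $T_m^{\star}$ exists, take $S = S'$. Otherwise define $S$ by moving $T_k$ to the position immediately preceding $T_m^{\star}$ in $S'$, keeping the relative order of all other transactions unchanged.

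I would then verify three properties of $S$. First, $S$ is equivalent to the same completion of $H$ used by $S'$, since only whole transaction blocks have been permuted. Second, $S$ respects $\prec_H^{RT}$: the only relation that could be broken is one of the form $T'' \prec_H^{RT} T_k$, but this forces $T''$ to be t-complete with its last event before $T_k$'s first event in $H$; combined with $T_k \prec_H^{LS} T_m^{\star}$ this yields $T'' \prec_H^{RT} T_m^{\star}$ and thus $T'' <_{S'} T_m^{\star}$, so $T''$ remains before $T_k$ in $S$. Third, $S$ is legal: for every read of a transaction other than $T_k$ the local serialization is unaffected, and for a read $\Read_{T_k}(X)$ inside $T_k$ the latest committed writer $T_w$ visible in ${S'}_H^{k,X}$ is shown to satisfy $T_w <_{S'} T_m^{\star}$ by a short case analysis on whether $T_w$'s last event precedes $T_k$'s first event in $H$ or $T_w$ overlaps $T_k$ in $H$. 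In the overlap case $T_w \in \ms{Lset}_H(T_k)$, so the hypothesis that this live set is fully complete together with $T_k \prec_H^{LS} T_m^{\star}$ pins $T_w$'s last event before $T_m^{\star}$'s first event, giving $T_w \prec_H^{RT} T_m^{\star}$.

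The hard part will be argument (iii): I must rule out the possibility that a committed writer to $X$ visible in ${S'}_H^{k,X}$ is shifted past $T_k$ in $S$, or that no new committed writer to $X$ appears just before $T_k$ in $S$. This is exactly where the hypothesis on $\ms{Lset}_H(T_k)$ is essential, since otherwise an incomplete overlapping writer of $X$ could sit between $T_m^{\star}$ and $T_k$ in $S'$ and would be absorbed on the wrong side by the move. Finally, if the conclusion of the lemma is to be read for all pairs $T_j \prec_H^{LS} T_m$ simultaneously, I would iterate the same construction over every complete-but-not-t-complete transaction with complete live set, and the invisibility observation guarantees that subsequent moves preserve all invariants established by earlier moves.
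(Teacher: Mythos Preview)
Your proposal is correct and follows essentially the same route as the paper. The paper also starts from an arbitrary du-opaque serialization $\tilde S$, iterates over every $T_k$ that is complete but not t-complete, and for each such $T_k$ moves it to immediately precede the $\tilde S$-earliest $T_\ell$ with $T_k\prec_H^{LS} T_\ell$; it then verifies real-time order and legality using exactly your ``invisibility'' observation (such a $T_k$ never invokes $\TryC_k$, so it is absent from every other transaction's local serialization) and the same live-set argument to force the relevant committed writer $T_w$ to satisfy $T_w\prec_H^{RT} T_\ell$. Two minor remarks: the appeal to Lemma~\ref{lm:dusep} is unnecessary since du-opacity of $H$ already hands you $S'$ directly; and your worry about a ``new committed writer appearing just before $T_k$'' cannot arise, since moving $T_k$ earlier only shrinks the set of transactions preceding it.
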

\begin{proof}
Since $H$ is du-opaque, there exists a serialization ${\tilde S}$ of $H$.

Let ${ S}$ be a t-complete t-sequential history such that $\txns(\tilde S)=\txns(S)$, and $\forall~T_i \in \txns(\tilde S): S|i={\tilde S}|i$.
%For each $(T_k, T_m)$; $T_k\prec_H^{LS} T_m$ and $T_m<_{\tilde S} T_k$ (observe that $T_k$ must be complete, but not t-complete),
We now perform the following procedure iteratively to derive $\ms{seq}(S)$ from $\ms{seq}(\tilde S)$.
Initially $\ms{seq}(S)=\ms{seq}(\tilde S)$.
For each $T_k \in \txns(H)$, let $T_{\ell}\in \txns(H)$ denote the earliest transaction in ${\tilde S}$ such that $T_k \prec_H^{LS} T_{\ell}$.
If $T_{\ell} <_{\tilde S} T_k$ (implying $T_k$ is not t-complete), then move $T_k$ to immediately precede $T_{\ell}$ in $\ms{seq}(S)$.
%informally, $T_k$ is moved immediately behind $T_{\ell}$ in $S$ while the events of each transaction in $S$ are identical to its events in ${\tilde S}$.
%
%\begin{itemize}
%\item if $T_{\ell} <_{\tilde S} T_{k}$, then $T_k <_S T_{\ell}$, and
%\item $\forall~T_i,T_j \in \txns(\tilde S)$;$i\neq k, j\neq k$, $T_i <_{\tilde S} T_j$, then $T_i <_{S} T_j$, and
%\item $\forall~T_i \in \txns(\tilde S); T_i <_{\tilde S} T_{\ell}$, then $T_i <_S T_k$, and
%\item $\forall~T_i \in \txns(\tilde S); T_{\ell} <_{\tilde S} T_{i}$, then $T_k <_S T_i$.
%\end{itemize}
%%

By construction, $S$ is equivalent to ${\tilde S}$ and for all $T_k, T_m \in\txns(H)$; $T_k\prec_H^{LS} T_m$, $T_k<_{S} T_m$
We claim that $S$ is a serialization of $H$.
%For each $T_k \in \txns(H)$, let $\ms{Move}_S^{\tilde S}(T_k)$ denote the set of transactions that precede $T_k$ in ${\tilde S}$, but succeed $T_k$ in $S$.
%By construction, for any $(T_k, T_m)$; $T_k\prec_H^{LS} T_m$ and any $T_i \in \txns(H)$, $T_i \prec_H^{RT} T_k$, $T_i <_S T_k$.
Observe that any two transactions that are complete in $H$, but not t-complete are not related by real-time order in $H$.
By construction of $S$, for any transaction $T_k\in \txns(H)$, the set of transactions that precede $T_k$ in ${\tilde S}$, but succeed $T_k$ in $S$ are not related to $T_k$ by real-time order.
Since $\tilde S$ respects the real-time order in $H$, this holds also
for $S$.
%For any $(T_k,T_{\ell})$; $T_k \prec_H^{LS} T_{\ell}$, if $T_{\ell} <_{\tilde S} T_k$, then $T_k$ is not t-complete in $H$. Thus,
%moving $T_k$ to immediately precede $T_{\ell}$ does not affect the real-time order relations in $S$.

We now show that $S$ is legal.
Consider any $\Read_k(X)$ performed by some transaction $T_k$ that returns $v\in V$
in $S$ and let $T_{\ell}\in \txns(H)$ be the earliest transaction in ${\tilde S}$ such that $T_k \prec_H^{LS} T_{\ell}$.
%Suppose otherwise that $v$ is not the latest written value of $X$ in $S$.
Suppose, by contradiction, that $\Read_k(X)$ is not legal in $S$.
Thus, there exists a committed transaction $T_m$ that performs
$\Write_m(X,v)$ in ${\tilde S}$ such that $T_m=T_{\ell}$ or $T_{\ell} <_{\tilde S} T_m <_{\tilde S} T_k$.
Note that, by our assumption, $\Read_k(X) \prec_H^{RT} \TryC_{\ell}()$.
Since $\Read_k(X)$ must be legal in the local serialization of $\tilde S$ with respect to $H$, $\Read_k(X) \not\prec_H^{RT} \TryC_m()$.
Thus, $T_m \in \ms{Lset}_H(T_k)$. Therefore $T_m\neq T_{\ell}$.
Moreover, $T_m$ is complete, and since it commits in $\tilde S$, it
is also t-complete in $H$ and the last event of $T_m$
precedes the first event of $T_{\ell}$ in $H$, i.e., $T_m\prec_H^{RT}
T_{\ell}$. Hence, $T_{\ell}$ cannot precede $T_m$ in ${\tilde S}$---a contradiction.

%Now consider any $\Read_i(X)$; $T_{\ell} <_S T_i$ that returns $v\in V$ in $S$.
%Observe that if $T_{\ell}$ precedes $T_k$ in ${\tilde S}$,
%then $T_k$ is not t-complete in $H$.
Observe also that since $T_k$ is complete in $H$ but not t-complete,
$H$ does not contain an invocation of $\TryC_k()$. Thus,
the legality of any other transaction is unaffected by moving $T_k$ to
precede $T_{\ell}$ in $S$.
Thus, $S$ is a legal t-complete t-sequential history equivalent to some completion of $H$.
The above arguments also prove that every t-read in $S$ is legal in
its local serialization with respect to $H$ and $S$ and, thus, $S$ is
a serialization of $H$.
\end{proof}
\begin{theorem}
\label{th:lc}
Under the restriction that in any infinite history $H$,
every transaction $T_k \in \txns(H)$ is complete,
du-opacity is a limit-closed property.
\end{theorem}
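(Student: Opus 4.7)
The plan is to apply K\"onig's Path Lemma (Lemma~\ref{lm:konig}) to a tree of finite du-opaque serializations and then pass to a limit. For each $i\in\Nat$, prefix-closure (Corollary~\ref{cr:pc}) makes $H^i$ du-opaque, so the set $\V_i$ of all its du-opaque serializations is non-empty; since $H^i$ contains only finitely many transactions, $\V_i$ is finite. I form a rooted directed graph $G$ whose vertices are a dummy root $v_\bot$ together with $\bigcup_{i\ge 0}\V_i$, with edges from $v_\bot$ to every element of $\V_0$ and, for every $i$, an edge from $v_i\in\V_i$ to $v_{i+1}\in\V_{i+1}$ whenever $\ms{seq}(v_i)$ is a subsequence of $\ms{seq}(v_{i+1})$. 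Then $G$ is infinite, finitely branching, and every vertex is reachable from $v_\bot$ because Lemma~\ref{lm:dusep}, applied to $H^{i+1}$ and $v_{i+1}\in\V_{i+1}$, produces a parent in $\V_i$. K\"onig's Lemma then yields an infinite chain $v_0,v_1,\ldots$ of serializations with $\ms{seq}(v_i)$ a subsequence of $\ms{seq}(v_{i+1})$ for every $i$.

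From this chain I assemble the candidate serialization $S$. The transaction order is $T<_S T'$ iff $T<_{v_i}T'$ at some (equivalently, all sufficiently large) $i$ with $T,T'\in\txns(v_i)$; the subsequence property makes this a well-defined total order on $\txns(H)=\bigcup_i\txns(v_i)$. To fix the per-transaction event sequences, I invoke the hypothesis that every $T\in\txns(H)$ is complete in $H$: let $j_T$ be the least $i$ with $T$ complete in $H^i$, and set $S|T:=v_{j_T}|T$. Since $H^k|T=H|T$ for all $k\ge j_T$, the completion rules of Definition~\ref{def:comp} force $v_k|T=v_{j_T}|T$ (namely $H|T$ if $T$ is t-complete in $H$, otherwise $H|T\cdot\TryC_T\cdot A_T$). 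Thus $S$ is t-complete, t-sequential, and equivalent to the completion of $H$ that aborts each complete but not t-complete transaction, which establishes condition (1) of Definition~\ref{def:opacityKR}.

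Condition (2) follows because $T\prec_H^{RT}T'$ implies $T\prec_{H^i}^{RT}T'$ for every $i\ge\max(j_T,j_{T'})$, so $T<_{v_i}T'$ and hence $T<_S T'$. For condition (3), given a read $\Read_k(X)\ne A_k$ in $S$, I choose $i$ large enough that $i\ge j_{T_k}$ and $i\ge j_{T_m}$ for every transaction $T_m$ whose $\TryC_m$ is invoked in $H$ before the response of $\Read_k(X)$; only finitely many such $T_m$ arise, so such an $i$ exists. At that level $(v_i)_{H^i}^{k,X}$ and $S_H^{k,X}$ contain the same transactions in the same $<_S$-order with identical event sequences, so legality of $\Read_k(X)$ in $(v_i)_{H^i}^{k,X}$ (which holds because $v_i$ serializes $H^i$) transfers to $S_H^{k,X}$. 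An analogous argument establishes full legality of $S$: any later-arriving transaction inserted between $T_k$'s latest $X$-writer and $T_k$ and committing a different value would already have broken legality of the serialization at the level where it was introduced.

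The main obstacle, and the point at which the completeness hypothesis is indispensable, is the stabilisation of per-transaction event sequences and local serializations along the K\"onig chain. Proposition~\ref{lm:lmn} shows that without this hypothesis even the transaction order can fail to stabilise (an incomplete $\TryC$ forces a transaction to drift arbitrarily far along the chain), so the limit would not be well defined; completeness guarantees that $v_k|T$ is fixed for all $k\ge j_T$ and that each local serialization is finite, which is exactly what lets finite-level legality survive the passage to the limit.
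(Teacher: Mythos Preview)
Your high-level strategy (build a tree of finite serializations, apply K\"onig, take a limit) matches the paper's, but there is a real gap in how you pass to the limit. You define $T<_S T'$ from the K\"onig chain and correctly observe that the subsequence edges make this a total order; what you never establish is that this order has order type~$\omega$, i.e., that every transaction has only finitely many $<_S$-predecessors. Without that, $S$ is not a history in the sense of the paper (a sequence of events), and conditions~(1)--(3) of Definition~\ref{def:opacityKR} cannot even be stated for it. Completeness of every transaction does \emph{not} by itself give this: consider $T_1$ performing a single $\Read_1(Z)\to 0$ (complete but not t-complete), followed by infinitely many committed transactions $T_2,T_3,\ldots$ each writing a fresh t-object. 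None of the $T_i$ is real-time related to $T_1$, so for every $m$ the ordering $T_0,T_2,\ldots,T_m,T_1$ is a valid du-opaque serialization of the corresponding prefix, and these serializations form a perfectly good infinite path in your graph (each is a subsequence of the next). If K\"onig hands you this path, $T_1$ has infinitely many $<_S$-predecessors and your limit is not a history. Your final paragraph conflates two different stabilisation phenomena: the per-transaction event sequence $v_k|T$ does stabilise once $T$ is complete, but the \emph{position} of $T$ in $\ms{seq}(v_k)$ need not.

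This is exactly why the paper does not take all du-opaque serializations as vertices. It first proves Lemma~\ref{lm:full}, which shows that every finite du-opaque history admits a serialization respecting the live-set order $\prec_H^{LS}$, and then restricts the K\"onig tree to such serializations (and uses a coarser edge relation based on $\ms{cseq}_i$, the subsequence of transactions already complete in $H^i$). Under the completeness hypothesis, every transaction's live set eventually becomes complete, so along any path in the restricted tree no later-starting transaction can be inserted before it; this is what forces positions to stabilise and makes the function $f$ of Claim~\ref{cl:bij} total. Your argument can be repaired by importing Lemma~\ref{lm:full} and restricting $\V_i$ accordingly; without that restriction the K\"onig step does not deliver a usable limit.
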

\begin{proof}
We are given an infinite sequence of finite ever-extending du-opaque
histories, let $H$ be the corresponding infinite limit history.
We want to show that $H$ is also du-opaque.
By Corollary~\ref{cr:pc}, every prefix of $H$ is du-opaque.
Therefore, we can assume the sequence of du-opaque histories
to be $H^0,H^1, \ldots H^i,H^{i+1},\ldots$, where each $H^i$ is the prefix of
$H$ of length $i$.

We construct a rooted directed graph $G_H$ as follows:
%We say that a transaction $T_k$ is \emph{t-complete} in $H^i$ if $H^i$ contains the last event of $T_k$ in the infinite history $H$ (by assumption, this is a response event).
%Let $\ms{cseq}(S^i)$ denote the subsequence of $\ms{seq}(S^j)$ reduced to \emph{t-complete transactions} in $H^i$.
%
\begin{enumerate}
\item[(0)]
The root vertex of $G_H$ is ($H^0, S^0)$ where $S^0$ and $H^0$ contain the initial transaction $T_0$.
\item[(1)]
Each non-root vertex of $G_H$ is a tuple $({H}^i, S^i)$,
where $S^i$ is a du-opaque serialization of ${H}^i$ that satisfies
the condition specified in Lemma~\ref{lm:full}: for all $T_k, T_m \in\txns(H)$; $T_k\prec_H^{LS} T_m$, $T_k<_S T_m$.
Note that there exist several possible serializations for any $H^i$.
For succinctness, in the rest of this proof, when we refer to a specific $S^i$,
it is understood to be associated with the prefix $H^i$ of $H$.
\item[(2)]
We say that a transaction $T$ is \emph{complete in $H'$ with respect to $H$}, where $H$ is any extension of $H'$
if last step of $T$ in $H$ is a response event and it is contained in $H'$.

Let $\ms{cseq}_i(S^j)$, $j\geq i$, denote the subsequence of
$\ms{seq}(S^j)$ reduced to transactions that are \emph{complete in $H^i$ with respect to $H$}.
For every pair of vertices $v=({H}^i,S^i)$ and $v'=({H}^{i+1}, S^{i+1})$ in $G_H$,
there is an edge from $v$ to $v'$ if $\ms{cseq}_i(S^i)=\ms{cseq}_i(S^{i+1})$.
\end{enumerate}
The out-degree of a vertex $v=(H^i,S^i)$ in $G_H$ is defined by
the number of possible serializations of ${H}^{i+1}$,
bounded by the number of possible permutations
of the set $\txns(S^{i+1})$, implying that $G_H$ is \emph{finitely branching}.

By Lemma~\ref{lm:dusep},
given any serialization $S^{i+1}$ of $H^{i+1}$,
there exists a serialization $S^i$ of $H^i$
such that $\ms{seq}(S^i)$ is a subsequence of $\ms{seq}(S^{i+1})$.
Indeed, the serialization $S^i$ of $H^i$ also respects the restriction specified in Lemma~\ref{lm:full}.
Since $\ms{seq}(S^{i+1})$ contains every complete transaction that takes its last step in $H$ in $H^i$,
$\ms{cseq}_i(S^i)=\ms{cseq}_i(S^{i+1})$.
Therefore, for every vertex $({H}^{i+1},S^{i+1})$,
there is a vertex $({H}^{i},S^i)$ such that $\ms{cseq}_i(S^i)={cseq}_i(S^{i+1})$.
Thus, we can iteratively construct a path from
$(H^0,S^0)$ to every vertex $(H^i,S^i)$ in $G_H$,
implying that $G_H$ is \emph{connected}.

We now apply K\"{o}nig's Path Lemma to $G_H$.
Since $G_H$ is an infinite connected finitely branching rooted directed graph,
we can derive an infinite sequence of non-repeating vertices
\[
\mathcal{L}=(H^0,S^0), (H^1, S^1), \ldots, (H^i,S^i), \ldots
\]
such that $\ms{cseq}_i(S^i)=\ms{cseq}_i(S^{i+1})$.

The rest of the proof explains how to use $\mathcal{L}$ to construct a serialization of $H$.
We begin with the following claim concerning $\mathcal{L}$.
\begin{claim}
\label{cl:lclaim}
For any $j>i$, $\ms{cseq}_i(S^i)=\ms{cseq}_i(S^{j})$.
\end{claim}
\begin{proof}
%From $\mathcal{L}$, we have the following relations:
Recall that $\ms{cseq}_i(S^i)$ is a prefix
of $\ms{cseq}_i(S^{i+1})$, and
$\ms{cseq}_{i+1}(S^{i+1})$ is a prefix of $\ms{cseq}_{i+1}(S^{i+2})$.
Also, $\ms{cseq}_i(S^{i+1})$ is a subsequence of $\ms{cseq}_{i+1}(S^{i+1})$.
%since if a transaction takes its last step of $H$ in $H^{i}$, is also t-complete in $H^i$.
Hence, $\ms{cseq}_i(S^{i})$ is a subsequence of
$\ms{cseq}_{i+1}(S^{i+2})$.
But, $\ms{cseq}_{i+1}(S^{i+2})$ is a subsequence of
$\ms{cseq}_{i+2}(S^{i+2})$.
Thus, $\ms{cseq}_i(S^i)$ is a subsequence of
$\ms{cseq}_{i+2}(S^{i+2})$.
Inductively, for any $j>i$, $\ms{cseq}_i(S^i)$ is a subsequence of
$\ms{cseq}_{j}(S^{j})$.
But $\ms{cseq}_{i}(S^{j})$ is the
subsequence  of $\ms{cseq}_{j}(S^{j})$ reduced to
transactions that are complete in $H^i$ with respect to $H$.
Thus, $\ms{cseq}_i(S^i)$ is indeed equal to $\ms{cseq}_{i}(S^{j})$.
\end{proof}
Let $f:\Nat \rightarrow \txns(H)$ be defined as follows:
$f(1) = T_0$,
For every integer $k > 1$, let
\[ i_k = \min \{ \ell \in \Nat | \forall j>\ell:
\ms{cseq}_{\ell}(S^{\ell})[k]=\ms{cseq}_j(S^j)[k]\} \]
Thus, $f(k)=\ms{cseq}_{i_k}(S^{i_k})[k]$.
%
%%function be defined in Figure~\ref{fig:func}.
%%
%\begin{figure*}[t]
%\begin{framed}
%\hspace{4cm} $f(1)=T_0$ \\
%$\forall k \in \Nat \setminus \{1\}:f(k)=\ms{cseq}_i(S^i)[k]$;$i=\ms{min} \{\ell \in \Nat | \forall j>\ell:\ms{cseq}_{\ell}(S^{\ell})[k]=\ms{cseq}_j(S^j)[k]\}$
%\end{framed}
%\caption{Function $f:\Nat \rightarrow \txns(H)$}
%\label{fig:func}
%\end{figure*}
%
\begin{claim}
\label{cl:bij}
The function $f$ is \emph{total} and \emph{bijective}.
\end{claim}
\begin{proof}
\textit{(Totality and surjectivity)}

Since each transaction $T \in \txns(H)$ is complete in some prefix $H^i$ of $H$,
for each $k\in \Nat$, there exists $i \in \Nat$ such that $\ms{cseq}_i(S^i)[k]=T$.
By Claim~\ref{cl:lclaim},
for any $j>i$, $\ms{cseq}_i(S^i)=\ms{cseq}_i(S^{j})$.
Since a transaction that is complete in $H^i$ w.r.t $H$ is also
complete in $H^j$ w.r.t $H$,
it follows that for every $j >i$, $\ms{cseq}_j(S^j)[k']=T$, with $k' \geq k$.
By construction of $G_H$ and the assumption that each transaction is complete in $H$, there exists $i\in \Nat$
such that each $T \in \ms{Lset}_{H^i}(T)$ is complete in $H^i$ with respect to $H$ and
$T$ precedes in $S^i$ every transaction whose first event succeeds the last event of each $T'\in \ms{Lset}_{H^i}(T)$ in $H^i$.
Indeed, this implies that for each $k\in \Nat$, there exists $i \in \Nat$ such that $\ms{cseq}_i(S^i)[k]=T$; $\forall j>i:
\ms{cseq}_j(S^j)[k]=T$.
%Suppose that there exists a transaction $T_m \in \txns(H)$ such that $\not\exists i\in \Nat:\ms{cseq}_i(S^i)[k]=T_m$ and $\forall j > i:\ms{cseq}_j(S^j)[k]=T_m$.
%This implies that $\forall i\in \Nat$:$\not\exists k\in \Nat,\ms{cseq}_i(S^i)[k]=T_m$ such that $\forall j > i$, $\ms{cseq}_j(S^j)[k]=T_m$ i.e. for each $j >i$ and any $i \in \Nat$: $\exists T_{\ell} \in \txns(H)$: $\ms{cseq}_j(S^j)[k]=T_{\ell}$---contradiction.
%
%\textit{(Surjectivity)}

This shows that for every $T \in \txns(H)$,
there are $i,k\in \Nat$; $\ms{cseq}_i(S^i)[k]=T$,
such that for every $j > i$, $\ms{cseq}_j(S^j)[k]=T$.
Thus, for every $T \in \txns(H)$, there is $k$ such that $f(k)=T$.

\textit{(Injectivity)}
%By definition, $\forall k \in \Nat \setminus \{1\}$, $f(k)$ is defined to be the transaction at index $k$ in $\ms{cseq}_i(S^i)$, where $i$ is the smallest natural number such that $\forall j>i:\ms{cseq}_i(S^i)[k]=\ms{cseq}_j(S^j)[k]$.

%Assume, by way of contradiction, that for some $k \neq m$, $f(k)=f(m)$.
If $f(k)$ and $f(m)$ are transactions at indices $k$, $m$
of the same $\ms{cseq}_i(S^i)$, then clearly $f(k) = f(m)$ implies $k=m$.
Suppose $f(k)$ is the transaction at index $k$ in some $\ms{cseq}_i(S^i)$
and $f(m)$ is the transaction at index $m$ in some $\ms{cseq}_{\ell}(S^{\ell})$.
For every $\ell > i$ and $k<m$,
if $\ms{cseq}_i(S^i)[k]=T$, then $\ms{cseq}_{\ell}(S^{\ell})[m] \neq T$
since $\ms{cseq}_i(S^i)=\ms{cseq}_i(S^{\ell})$.
If $\ell > i$ and $k>m$, it follows from the definition that $f(k)\neq f(m)$.
Similar arguments for the case when $\ell < i$ prove that if $f(k)= f(m)$, then $k=m$.
\end{proof}
By Claim~\ref{cl:bij}, $\mathcal{F}=f(1),f(2),\ldots, f(i) ,\ldots$
is an infinite sequence of transactions.
Let $S$ be a t-complete t-sequential history such that
$\ms{seq}(S)=\mathcal{F}$
and for each t-complete transaction $T_k$ in $H$, $S|k=H|k$; and for
transaction that is complete, but not t-complete in $H$, $S|k$ consists of the sequence of
events in $H|k$, immediately followed by $\TryA_k()\cdot A_k$.
Clearly, there exists a completion of $H$ that is equivalent to $S$.

Let $\mathcal{F}^i$ be the prefix of $\mathcal{F}$ of length $i$,
and ${\widehat S^i}$ be the prefix of $S$ such that
$\ms{seq}({\widehat S^i})=\mathcal{F}^i$.
\begin{claim}
\label{cl:final}
Let ${\widehat H^j}_i$ be a subsequence of $H^j$ reduced
to transactions in ${\widehat S^i}$ such that
each $T_k \in \txns({\widehat S^i})$ is complete in $H^j$ with respect to $H$.
Then, for every $i$, there is $j$ such that
${\widehat S^i}$ is a serialization of ${\widehat H^j}_i$.
\end{claim}
\begin{proof}
%
%Let $H^j$ denote the shortest prefix of $H$ such that every transaction $T_k \in {\widehat S^i}$ is t-complete in $H^j$.
%From $\mathcal{L}$, we obtain the serialization $S^j$ of $H^j$.
%Note that $\txns({\widehat S^i}) \subseteq \txns(S^j)$. Thus, ${\widehat S^i}$ is a subsequence of $S^j$.
Let $H^j$ be the shortest prefix of $H$ (from $\mathcal{L}$)
such that for each $T\in \txns({\widehat S^i})$,
if $\ms{seq}(S^j)[k]=T$, then for every $j'>j$, $\ms{seq}(S^{j'})[k]=T$.
From the construction of $\mathcal{F}$, such $j$ and $k$ exist.
Also, we observe that $\txns({\widehat S^i}) \subseteq \txns(S^j)$
and $\mathcal{F}^i$ is a subsequence of $\ms{seq}(S^j)$.
Using arguments similar to the proof of Lemma~\ref{lm:dusep},
it follows that ${\widehat S^i}$ is indeed a serialization of ${\widehat H^j}_i$.
\end{proof}
Since $H$ is complete, there is exactly one completion of $H$, where
each transaction $T_k$ that is not t-complete in $H$ is completed with
$\textit{tryC}_k\cdot A_k$ after its last event.
By Claim~\ref{cl:final}, the limit t-sequential t-complete history is
equivalent to this completion, is legal,
respects the real-time order of $H$, and ensures that
every read is legal in the corresponding local serialization.
Thus, $S$ is a serialization of $H$.
%$S$ is a legal t-sequential history equivalent to $H$ ($H$ is its own completion)
%such that for any $\Read_k(X)$ in $H$, if $\rho_S(\Read_k(X))=T_m$, then $\Read_k(X) \prec_H^{RT} \TryC_m()$.
%Also, for
\end{proof}
From Theorem~\ref{th:lc}, it follows that:
\begin{corollary}
\label{cr:safetytm}
Let $M$ be any TM implementation that ensures that in every infinite history $H$ of $M$, each transaction $T\in \txns(H)$ is complete in $H$.
Then, $M$ is du-opaque \emph{iff} every finite history of $M$ is du-opaque.
\end{corollary}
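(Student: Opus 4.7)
The plan is to derive the corollary as a direct consequence of the two main closure results established earlier: prefix-closure (Corollary~\ref{cr:pc}) and the restricted limit-closure (Theorem~\ref{th:lc}). Both directions of the ``iff'' are essentially handed to us, so I would present the argument as two short paragraphs, one per direction, with the backward direction doing the real work via Theorem~\ref{th:lc}.

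For the forward direction, I would simply note that if $M$ is du-opaque, meaning every history (finite or infinite) of $M$ is du-opaque, then in particular every finite history of $M$ is du-opaque. No closure property is even needed here.

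For the backward direction, assume every finite history of $M$ is du-opaque, and let $H$ be an arbitrary (possibly infinite) history of $M$. If $H$ is finite, we are done by assumption. If $H$ is infinite, I would invoke the fact that the set of histories of $M$ is prefix-closed (as remarked by the authors just after Definition~\ref{def:pc}), so the finite prefixes $H^0, H^1, H^2, \ldots$ of $H$ are themselves histories of $M$, and by hypothesis each one is du-opaque. Clearly $H$ is the limit of this ever-extending sequence of du-opaque finite histories. Because we assumed that $M$ guarantees that every transaction in an infinite history is complete, the hypothesis of Theorem~\ref{th:lc} is satisfied by $H$, and the theorem therefore yields that $H$ itself is du-opaque. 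Hence every history of $M$ is du-opaque, i.e., $M$ is du-opaque.

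There is no genuine obstacle here: the corollary is a packaging result whose content is entirely contained in Corollary~\ref{cr:pc} and Theorem~\ref{th:lc}. The only small point to be careful about is making explicit that the prefix-closure of the set of histories of $M$ (an implementation-level property, not the correctness property) is what allows us to represent an infinite history of $M$ as the limit of \emph{histories of $M$}, which is the form required to apply Theorem~\ref{th:lc}.
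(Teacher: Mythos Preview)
Your proposal is correct and matches the paper's approach: the paper simply states that the corollary follows from Theorem~\ref{th:lc} without giving an explicit argument, and your unpacking of both directions is exactly the intended reasoning. One minor remark: although your opening sentence cites Corollary~\ref{cr:pc}, your actual argument (correctly) never uses prefix-closure of du-opacity; what you use is prefix-closure of the set of histories of $M$, as you yourself note at the end.
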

\section{Comparison with Other TM Consistency Definitions}
\label{sec:gkopacity}
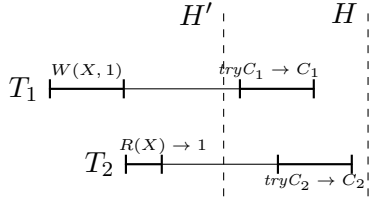
\begin{figure}[t]
\begin{tikzpicture}
\node (w2) at (2,0) [] {};
\node (r1) at (3,-1) [] {};
\node (c2) at (4.4,0) [] {};
\node (c1) at (5,-1) [] {};
%\node (n) at (2,2) [] {};
\draw (c1) node [below] {\tiny {$\TryC_2 \rightarrow C_2$}}
(r1) node [above] {\tiny {$R(X)\rightarrow 1$}};
   
\draw (c2) node [above] {\tiny {$\TryC_1 \rightarrow C_1$}}
   (w2) node [above] {\tiny {$W(X,1)$}};

%\draw (w3) node [below] {\tiny {$W(X_5)$}};   

%\draw (n) node[below] {Initial state of the linked-list: $\{1,2,3\}$} to (3,2);

\begin{scope}   
\draw [|-|,thick] (1.5,0) node[left] {} to (2.5,0);
\draw [|-|,thick] (4,0) node[left] {} to (5,0);
\draw [-,thin] (1.5,0) node[left] {$T_1$} to (5,0);
\end{scope}
\begin{scope}
\draw [|-|,thick] (2.5,-1) node[left] {} to (3,-1);
\draw [|-|,thick] (4.5,-1) node[left] {} to (5.5,-1);
\draw [-,thin] (2.5,-1) node[left] {$T_2$} to (5.5,-1) ; 
\end{scope}  
\begin{scope}
\draw [-,dashed] (3.8,1) node[left] {$H'$} to (3.8,-1.5) ; 
\end{scope}  
\begin{scope}
\draw [-,dashed] (5.7,1) node[left] {$H$} to (5.7,-1.5) ; 
\end{scope}  
\end{tikzpicture}
\caption{History $H$ is final-state opaque, while its prefix $H'$ is not final-state opaque}
\label{fig:one}
\end{figure}
\subsection{Relation to Opacity}
\label{sec:gko}
In this section, we relate du-opacity with opacity, as defined by
Guerraoui and Kapalka~\cite{tm-book}.
Note that the definition presented in~\cite{tm-book} applies to any object
with a sequential specification.
For the sake of comparison, we restrict it here to TMs with read-write
semantics.
%}

%Let $H$ be any finite history.
%Recall that a completion $H$ is any history $\bar{H}$ that
%conforms to Definition~\ref{def:comp}.
%such that:
%\begin{itemize}
%\item $H$ is a prefix of ${\bar H}$
%\item for every incomplete t-operation $op_k$ of $T_k$,
%if $op_k=\Read_k \vee \Write_k$, insert $A_k$ somewhere after the invocation of $op_k$; else
%if $op_k=\TryA_k() \vee \TryC_k()$, insert a matching response to $op_k$ somewhere after the invocation of $op_k$.
%\item for every complete transaction $T_k$ that is not t-incomplete, insert $\textit{tryC}_k\cdot A_k$ somewhere after the last event of transaction $T_k$.
%\end{itemize}
%
\begin{definition} [Guerraoui and Kapalka~\cite{GK08-opacity,tm-book}]
\label{def:opacity GK}
A finite history $H$ is \emph{final-state opaque} if there
is a legal t-complete t-sequential history $S$,
such that
\begin{enumerate}
\item[(1)] for any two transactions $T_k,T_m \in \txns(H)$,
if $T_k \prec_H^{RT} T_m$, then $T_k <_S T_m$, and
\item[(2)] $S$ is equivalent to a completion of $H$
    (cf.~Definition~\ref{def:comp}).
\end{enumerate}
%$\prec_H^{RT} \subseteq \prec_S^{RT}$.
%for any two transactions $T_k,T_m \in \txns(H)$,
%if $T_k \prec_H^{RT} T_m$, then $T_k \prec_S T_m$ and
We say that $S$ is a \emph{final-state serialization} of $H$.
\end{definition}

Figure~\ref{fig:one} presents a t-complete sequential history $H$,
demonstrating that final-state opacity is not a prefix-closed property.
$H$ is final-state opaque,
with $T_1\cdot T_2$ being a legal t-complete t-sequential history equivalent to $H$.
Let $H'=\Write_1(X,1), \Read_2(X)$ be a prefix of $H$ in which
$T_1$ and $T_2$ are t-incomplete.
By Definition~\ref{def:comp}, $T_i$ ($i=1,2$) is completed by inserting $\textit{tryC}_i\cdot
A_i$ immediately after the last event of $T_i$ in $H$.
Observe that neither $T_1\cdot T_2$ nor $T_2\cdot T_1$ are sequences
that allow us to derive a serialization of $H'$ (we assume that the initial value of $X$ is $0$).

A restriction of final-state opacity, which we refer to as \emph{opacity},
was presented in \cite{tm-book} by filtering out
histories that are not prefix-closed.
\begin{definition}[Guerraoui and Kapalka~\cite{tm-book}]
\label{def:opaque}
A history $H$ is \emph{opaque} if and only if every finite prefix $H'$
of $H$ (including $H$ itself if it is finite) is final-state opaque.
\end{definition}
%
%Since final-state opacity is defined only for finite histories,
It can be easily seen that opacity is prefix and limit-closed,
and, thus, opacity is a safety property.
%
%\subsection{DU-Opacity and Opacity: a separation}
%\label{sec:sep}
%
\begin{proposition}
\label{pr:opgkeq}
There is an opaque history that is not du-opaque.
\end{proposition}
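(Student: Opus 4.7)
The plan is to exhibit a specific three-transaction history $H$ that witnesses the separation, deliberately violating the unique-writes assumption. The history involves transactions $T_1, T_2, T_3$: both $T_1$ and $T_2$ perform $\Write(X,1)$ on the same t-object $X$ (initial value $0$), and $T_3$ performs $\Read_3(X) \to 1$ and eventually commits. The interleaving is chosen so that (i) $T_1$ invokes $\TryC_1$ before $\Read_3(X)\to 1$ completes but its response $A_1$ arrives much later, (ii) $T_2$ writes to $X$ before $\Read_3(X) \to 1$ but invokes $\TryC_2$ only \emph{after} $\Read_3$'s response, and (iii) $\TryC_2$ is invoked \emph{before} $T_1$'s abort response. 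The history ends with $T_2$ committing and $T_3$ committing; all three transactions overlap in real time.

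I would first argue that $H$ is opaque by establishing final-state opacity of every prefix. The whole history is witnessed by the serialization $T_2 \cdot T_3 \cdot T_1$ (with $T_1$ aborted), where $\Read_3(X) \to 1$ is legal from $T_2$'s committed write; no real-time ordering is violated since all three transactions pairwise overlap. For every prefix taken before $T_1$'s abort response, the incomplete $\TryC_1$ can be completed with $C_1$, giving $T_1 \cdot T_3 \cdot T_2$ as a witness in which $T_3$ reads from $T_1$. For every prefix taken after $T_1$'s abort, $T_2$'s already-invoked $\TryC_2$ can be completed with $C_2$, again yielding $T_2 \cdot T_3 \cdot T_1$ as a witness.

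I would then show $H$ is not du-opaque. For any candidate serialization $S$ of $H$, the local serialization $S_H^{3,X}$ retains only those transactions whose $\TryC$ invocation is present in the prefix of $H$ ending at $\Read_3$'s response. This excludes $T_2$ and leaves only $T_1$ (aside from the initial $T_0$). But $T_1$ is aborted in the unique completion of $H$ (since $H|1$ ends with $A_1$), so its writes are ineffective in any legal $S$; consequently the latest written value of $X$ in $S_H^{3,X}$ is $0$, contradicting $\Read_3(X)\to 1$.

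The delicate point is the timing, and this is where the main obstacle lies: for $H$ to remain opaque at every prefix in which $T_1$ has already aborted, $T_2$ must already have invoked $\TryC_2$ in those prefixes, so that the completion can insert $C_2$ and legalize $\Read_3(X) \to 1$ through $T_2$; yet for du-opacity to fail, $\TryC_2$ must \emph{not} be invoked before $\Read_3$'s response. Placing $\TryC_2$'s invocation strictly between $\Read_3$'s response and $T_1$'s abort response achieves both constraints simultaneously. The non-unique-writes feature is essential here: if $T_1$ and $T_2$ wrote distinct values, no alternative source $T_2$ could rescue opacity after $T_1$'s abort while also being hidden from the du-opacity local serialization.
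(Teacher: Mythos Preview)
Your proposal is correct and takes essentially the same approach as the paper: the paper's counterexample (Figure~\ref{fig:lin-example}) is structurally identical to yours up to a relabeling of transactions (your $T_2,T_3$ are the paper's $T_3,T_2$), with the same timing constraints—an aborting writer whose $\TryC$ precedes the read, and a committing writer of the same value whose $\TryC$ is invoked only after the read responds but before the abort. Your analysis of why opacity holds prefix-by-prefix and why every candidate du-serialization fails at the local serialization for the read matches the paper's reasoning; the only cosmetic difference is that your reading transaction explicitly commits while the paper leaves it t-incomplete, which is immaterial.
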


\begin{proof}
Consider the finite history $H$ depicted in Figure~\ref{fig:lin-example}.
To prove that $H$ is opaque, we proceed by examining every prefix of $H$.

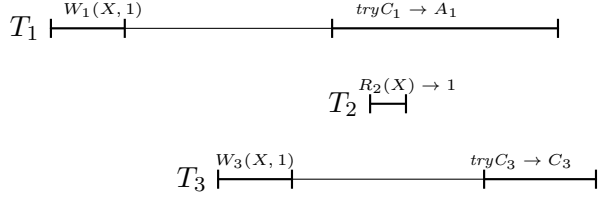
\begin{figure}[t]
\begin{tikzpicture}
\node (w1) at (-0.5,0) [] {};
\node (c1) at (3.5,0) [] {};
\node (r2) at (3.5,-1) [] {};
%\node (c2) at (3.7,-1) [] {};
\node (w3) at (1.5,-2) [] {};
\node (c3) at (5,-2) [] {};

\draw (w1) node [above] {\tiny {$W_1(X,1)$}};
\draw (c1) node [above] {\tiny {$\TryC_1 \rightarrow A_1$}};
\draw (r2) node [above] {\tiny {$R_2(X) \rightarrow 1$}};
%\draw (c2) node [above] {\tiny {$\TryC_2$}};
\draw (w3) node [above] {\tiny {$W_3(X,1)$}};
\draw (c3) node [above] {\tiny {$\TryC_3 \rightarrow C_3$}};

\begin{scope}   
\draw [|-|,thick] (-1.2,0) node[left] {} to (-0.2,0);
\draw [|-|,thick] (2.5,0) node[left] {} to (5.5,0);
\draw [-,thin] (-1.2,0) node[left] {$T_1$} to (5.5,0);
\end{scope}

\begin{scope}
\draw [|-|,thick] (3,-1) node[left] {$T_2$} to (3.5,-1);
%\draw [|-|,thick] (3.5,-1) node[left] {} to (4,-1);
%\draw [-,thick] (4,-1) node[left] {$T_2$} to (4,-1) ; 
\end{scope}  
\begin{scope}
\draw [|-|,thick] (1,-2) node[left] {} to (2,-2);
\draw [|-|,thick] (4.5,-2) node[left] {} to (6,-2);
\draw [-,thin] (1,-2) node[left] {$T_3$} to (6,-2) ; 
\end{scope}  
\end{tikzpicture}
\caption{History is opaque, but not du-opaque}
\label{fig:lin-example}
\end{figure}

\begin{enumerate}
\item
Each prefix up to the invocation of $\Read_2(X)$ is trivially final-state opaque.
\item
Consider the prefix, $H^i$ of $H$ where the $i^{th}$ event is the response of $\Read_2(X)$. Let $S^i$ be a t-complete t-sequential history derived from the sequence $T_1,T_2$ by inserting $C_1$ immediately after the invocation of $\TryC_1()$. It is easy to see that $S^i$ is a final-state serialization of $H^i$.
\item
Consider the t-complete t-sequential history $S$ derived from the sequence $T_1,T_3,T_2$ in which each transaction is t-complete in $H$. Clearly, $S$ is a final-state serialization of $H$.
\end{enumerate}
Since $H$ and every (proper) prefix of it are final-state opaque,
$H$ is opaque.

Clearly, the only final-state serialization $S$ of $H$ is specified by $\ms{seq}(S)=T_1,T_3,T_2$. Consider $\Read_2(X)$ in $S$; since $H^{2,X}$, the prefix of $H$ up to the response of $\Read_2(X)$ does not contain an invocation of $\TryC_3()$, the local serialization for $\Read_2(X)$ with respect to $H$ and $S$, $S_{H}^{2,X}$ is $T_1\cdot \Read_2(X)$. But $\Read_2(X)$ is not legal in $S_{H}^{2,X}$---contradiction.
Thus, $H$ is not du-opaque.
\end{proof}
\begin{theorem}
\label{th:gkkr}
DU-Opacity $\subsetneq$ Opacity.
\end{theorem}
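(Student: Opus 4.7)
The plan is to establish the theorem in two parts: first the inclusion DU-Opacity $\subseteq$ Opacity, and then strictness via the counter-example already constructed in Proposition~\ref{pr:opgkeq}.

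For the inclusion, I would observe that a du-opaque serialization $S$ of a history $H$, as given by Definition~\ref{def:opacityKR}, automatically satisfies the requirements of a final-state serialization in the sense of Definition~\ref{def:opacity GK}: $S$ is a legal t-complete t-sequential history that is equivalent to a completion of $H$ and respects the real-time order. The only additional clause in du-opacity is the local-serialization legality condition (clause (3) of Definition~\ref{def:opacityKR}), which only strengthens the notion. Hence, if $H$ is du-opaque, then $H$ is final-state opaque. To conclude that $H$ is opaque, I would invoke Corollary~\ref{cr:pc} (prefix-closure of du-opacity): every finite prefix $H'$ of $H$ is itself du-opaque, and therefore, by the same reasoning applied to $H'$, final-state opaque. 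Since every finite prefix of $H$ (including $H$ itself when finite) is final-state opaque, $H$ satisfies Definition~\ref{def:opaque} and is therefore opaque.

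For strictness, I would directly cite Proposition~\ref{pr:opgkeq}, which exhibits the history $H$ from Figure~\ref{fig:lin-example}: $H$ is opaque but not du-opaque, since its unique final-state serialization $T_1 \cdot T_3 \cdot T_2$ makes $\Read_2(X)$ illegal in its local serialization $T_1 \cdot \Read_2(X)$ (the prefix of $H$ up to the response of $\Read_2(X)$ contains no invocation of $\TryC_3()$). This witness gives Opacity $\setminus$ DU-Opacity $\neq \emptyset$.

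I do not expect any substantial obstacle: the inclusion direction is essentially bookkeeping against the two definitions, with Corollary~\ref{cr:pc} doing the work of lifting the per-prefix requirement of opacity to du-opacity, and strictness is already packaged in Proposition~\ref{pr:opgkeq}. The only mild subtlety to double-check is that the notion of ``completion'' used in both definitions matches (both rely on Definition~\ref{def:comp}), so that a single serialization $S$ of a du-opaque history $H$ indeed witnesses final-state opacity of $H$ without any further massaging of aborted or incomplete transactions.
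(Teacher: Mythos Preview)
Your proposal is correct and follows essentially the same approach as the paper: reduce the inclusion to the observation that any du-opaque serialization is already a final-state serialization, lift this to all finite prefixes via Corollary~\ref{cr:pc}, and invoke Proposition~\ref{pr:opgkeq} for strictness. Your write-up is in fact a bit more explicit than the paper's in spelling out why clauses~(1)--(2) of Definition~\ref{def:opacityKR} already yield a final-state serialization and in noting that both definitions share the same notion of completion.
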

\begin{proof}
We first claim that every finite du-opaque history is opaque.
Let $H$ be a finite du-opaque history.
By definition, there exists a final-state serialization $S$ of $H$.
Since du-opacity is a prefix-closed property,
every prefix of $H$ is final-state opaque.
Thus, $H$ is opaque.

Again, since
every prefix of a du-opaque history is also du-opaque,
by Definition~\ref{def:opaque},
every infinite du-opaque history is also opaque.

Proposition~\ref{pr:opgkeq} now establishes that du-opacity
is indeed a restriction of opacity.
\end{proof}
%
%
%\subsection{DU-Opacity and Opacity: Equivalence}
%\label{sec:opeqdef}
%
We now show that du-opacity is equivalent to opacity
assuming that no two transactions write identical values
to the same t-object (``unique-write'' assumption).

Let Opacity$_{ut}$ $\subseteq$ Opacity, be a property defined as follows:
\begin{enumerate}
\item[(1)] an infinite opaque history $H \in$ Opacity$_{ut}$ \emph{iff} every transaction $T\in \txns(H)$ is complete in $H$, and
\item[(2)] an opaque history $H \in$ Opacity$_{ut}$ \emph{iff} for any two transactions $T_k,T_m \in
\txns(H)$ that perform $\Write_k(X,v)$ and $\Write_m(X,v')$ respectively, $v\neq v'$.
\end{enumerate}
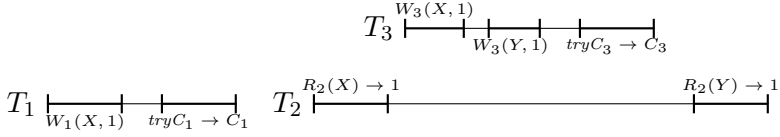
\begin{figure*}[!ht]
\begin{tikzpicture}
\node (w1) at (0,0) [] {};
\node (c1) at (1.5,0) [] {};
\node (r1) at (3.5,0) [] {};
\node (r2) at (8.5,0) [] {};

\node (w2) at (4.6,1) [] {};
\node (w22) at (5.6,1) [] {};
\node (c2) at (7,1) [] {};

\draw (w1) node [below] {\tiny {$W_1(X,1)$}};
\draw (c1) node [below] {\tiny {$\TryC_1 \rightarrow C_1$}};
\draw (r1) node [above] {\tiny {$R_2(X) \rightarrow 1$}};
\draw (r2) node [above] {\tiny {$R_2(Y) \rightarrow 1$}};

\draw (w2) node [above] {\tiny {$W_3(X,1)$}};
\draw (w22) node [below] {\tiny {$W_3(Y,1)$}};
\draw (c2) node [below] {\tiny {$\TryC_3 \rightarrow C_3$}};
\begin{scope}   
\draw [|-|,thick] (-0.5,0) node[left] {} to (0.5,0);
\draw [|-|,thick] (1,0) node[left] {} to (2,0);
\draw [-,thin] (-0.5,0) node[left] {$T_1$} to (2,0);
\draw [|-|,thick] (3,0) node[left] {} to (4,0);
\draw [|-|,thick] (8,0) node[left] {} to (9,0);
\draw [-,thin] (3,0) node[left] {$T_2$} to (9,0);
\end{scope}
\begin{scope}   
\draw [|-|,thick] (4.2,1) node[left] {} to (5,1);
\draw [|-|,thick] (5.3,1) node[left] {} to (6,1);
\draw [|-|,thick] (6.5,1) node[left] {} to (7.5,1);
\draw [-,thin] (4.2,1) node[left] {$T_3$} to (7.5,1);
\end{scope}
\end{tikzpicture}
\caption{A sequential du-opaque history that is not opaque by the definition in \cite{GHS08-permissiveness}}
\label{fig:du-example}
\end{figure*}
%$X\in Rset(T_k)\cap Wset(T_m)$, $T_m$ has committed, such that the response
%of \textit{read}$_k(X)$ precedes the invocation of $\TryC_m()$ in $H$.
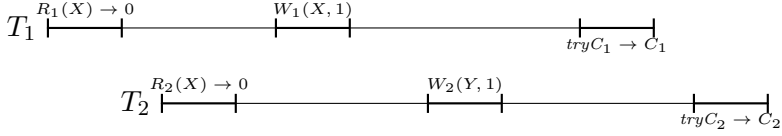
\begin{figure*}[!t]
\begin{tikzpicture}
\node (r1) at (0.5,0) [] {};
\node (w1) at (3.5,0) [] {};
\node (c1) at (7.5,0) [] {};
%\node (r2) at (8.5,0) [] {};

\node (r2) at (2,-1) [] {};
\node (w2) at (5.5,-1) [] {};
\node (c2) at (9,-1) [] {};

\draw (r1) node [above] {\tiny {$R_1(X)\rightarrow 0$}};
\draw (w1) node [above] {\tiny {$W_1(X,1)$}};
\draw (c1) node [below] {\tiny {$\TryC_1 \rightarrow C_1$}};
%\draw (r2) node [above] {\tiny {$R_2(Y) \rightarrow 1$}};

\draw (r2) node [above] {\tiny {$R_2(X)\rightarrow 0$}};
\draw (w2) node [above] {\tiny {$W_2(Y,1)$}};
\draw (c2) node [below] {\tiny {$\TryC_2 \rightarrow C_2$}};
\begin{scope}   
\draw [|-|,thick] (0,0) node[left] {} to (1,0);
\draw [|-|,thick] (3,0) node[left] {} to (4,0);
\draw [|-|,thick] (7,0) node[left] {} to (8,0);
\draw [-,thin] (0,0) node[left] {$T_1$} to (8,0);
\end{scope}
\begin{scope}   
\draw [|-|,thick] (1.5,-1) node[left] {} to (2.5,-1);
\draw [|-|,thick] (5,-1) node[left] {} to (6,-1);
\draw [|-|,thick] (8.5,-1) node[left] {} to (9.5,-1);
\draw [-,thin] (1.5,-1) node[left] {$T_2$} to (9.5,-1);
\end{scope}
\end{tikzpicture}
\caption{History is du-opaque, but not TMS2~\cite{TMS09}}
\label{fig:tms2}
\end{figure*}
\begin{theorem}
\label{th:opeq}
Opacity$_{ut}$=DU-Opacity.
\end{theorem}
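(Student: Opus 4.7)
The plan is to prove both Opacity$_{ut}\subseteq$ DU-Opacity and the reverse inclusion, with essentially all the substantive work in the forward direction. The reverse follows immediately from Theorem~\ref{th:gkkr} together with the definition of Opacity$_{ut}$: every du-opaque history is opaque, and the unique-writes hypothesis (and, for infinite histories, completeness of every transaction) packaged into Opacity$_{ut}$ are then simply hypotheses on the history itself, placing any such du-opaque history in Opacity$_{ut}$.

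For the forward direction, I first treat finite histories. Given a finite $H\in$ Opacity$_{ut}$, pick any final-state serialization $S$ of $H$ guaranteed by opacity; I will argue that $S$ is already a du-opaque serialization of $H$. The only additional condition to verify beyond final-state opacity is that for each $\Read_k(X)=v\neq A_k$ in $S$, legality holds in the local serialization $S_H^{k,X}$. Since $S_H^{k,X}$ is a subsequence of $S^{k,X}$ and $S$ is itself legal, it suffices to show that the committed transaction $T_m$ supplying $v$ in $S$ survives into $S_H^{k,X}$, i.e., that $\TryC_m()$ is invoked in the prefix $H^{k,X}$ of $H$ ending with the response of $\Read_k(X)$.

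This key claim is the heart of the argument, and it is here that unique-writes enters. Suppose, for contradiction, that $\TryC_m()$ is not yet invoked in $H^{k,X}$. By Definition~\ref{def:comp}, every completion of $H^{k,X}$ aborts $T_m$: either by appending $A_m$ after an incomplete operation of $T_m$, or by appending $\TryC_m\cdot A_m$ when $T_m$ is complete but not t-complete. Opacity is prefix-closed, so $H^{k,X}$ is final-state opaque and admits a legal final-state serialization $S'$ in which $T_m$ is aborted. Unique-writes forces $T_m$ to be the only transaction in $H$ that writes $v$ to $X$, so no committed transaction in $S'$ writes $v$ to $X$, contradicting legality of $\Read_k(X)=v$ in $S'$. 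Once $T_m\in S_H^{k,X}$ is established, the subsequence property of $S_H^{k,X}$ inside $S^{k,X}$ shows $T_m$ remains the latest committed writer of $X$ before $T_k$, giving legality in $S_H^{k,X}$.

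For an infinite $H\in$ Opacity$_{ut}$, each finite prefix inherits opacity (by prefix-closure of opacity) and unique-writes, so each prefix lies in Opacity$_{ut}$ and is du-opaque by the finite case. Since every transaction of $H$ is assumed complete, Theorem~\ref{th:lc} yields du-opacity of the limit $H$. The main obstacle is the contradiction used in the finite case: it requires combining Definition~\ref{def:comp} to force $T_m$ into an aborted state in \emph{every} completion of $H^{k,X}$, with unique-writes to make $T_m$ the only possible supplier of $v$. Figure~\ref{fig:lin-example} already illustrates that without unique-writes another transaction can commit $v$ to $X$ after the read, which is precisely the phenomenon that would break this argument.
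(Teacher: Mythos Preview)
Your proposal is correct and follows essentially the same route as the paper: for the forward direction on finite histories you take a final-state serialization $S$, assume some $\Read_k(X)=v$ fails local legality, observe that the supplying committed writer $T_m$ must then have $\TryC_m()$ uninvoked in $H^{k,X}$, use Definition~\ref{def:comp} to force $T_m$ to abort in every completion of $H^{k,X}$, and invoke unique-writes to contradict final-state opacity of that prefix; for infinite histories you reduce to prefixes and apply Theorem~\ref{th:lc}. Your framing (``show that the chosen $S$ is already du-opaque'') is in fact slightly cleaner than the paper's ``suppose $H$ is not du-opaque'' phrasing, and you are a bit more explicit about why $T_m$ remains the latest committed writer in $S_H^{k,X}$ via the subsequence property, but these are presentational rather than substantive differences.
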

\begin{proof}
We show first that every finite history $H\in $ Opacity$_{ut}$ is also du-opaque.
Let $H$ be any finite opaque history such that for any two transactions $T_k,T_m \in
\txns(H)$ that perform $\Write_k(X,v)$ and $\Write_m(X,v)$ respectively, $v\neq v'$
%We claim that $H$ is du-opaque.

Since $H$ is opaque, there exists a final-state serialization $S$ of $H$.
%We claim that each $\Read_k(X)$ in $S$ that does not return $A_k$ is legal in $S_{H}^{r_k(x)}$.
%Suppose by contradiction that $H$ is not du-opaque.
Suppose by contradiction that $H$ is not du-opaque. Thus, there exists a $\Read_k(X)$ that returns a value $v\in V$ in $S$
that is not legal in ${S}_{H}^{k,X}$, the local serialization for $\Read_k(X)$ with respect to $H$ and $S$.
% i.e. $v$ is not latest written value of $X$ in ${S}_{H}^{k,X}$.
Let ${H}^{k,X}$ and ${S}^{k,X}$ denote the prefixes of $H$ and $S$ resp. up to the response of $\Read_k(X)$ in $H$ and $S$ resp..
Recall that the local serialization for $\Read_k(X)$ with respect to $H$ and $S$, ${S}_{H}^{k,X}$ is defined as the subsequence of $S^{k,X}$ that does not contain events of any transaction $T_i \in \txns(H)$ if $H^{k,X}$ does not contain an invocation of $\TryC_i()$.
Since $\Read_k(X)$ is legal in $S$, there exists a committed transaction $T_m \in \txns(H)$ that performs
$\Write_m(X,v)$ that is the latest such write in $S$ that precedes $T_k$.
Thus, if $\Read_k(X)$ is not legal in ${S}_{H}^{k,X}$, the only possibility is that $\Read_k(X) \prec_H^{RT} \TryC_m()$.
Under the assumption of unique writes, there does not exist any other transaction $T_j \in \txns(H)$ that performs $\Write_j(X,v)$.
Consequently, there does not exist any ${\bar H^{k,X}}$ (some completion of $H^{k,X}$) and (t-complete t-sequential history) $S'$ such that $S'$ is equivalent to
${\bar H^{k,X}}$ and $S'$ contains any committed transaction that writes $v$ to $X$ i.e. $H^{k,X}$ is not final-state opaque. However, since $H$ is opaque, every prefix of $H$ must be final-state opaque---contradiction.

By Definition~\ref{def:opaque}, an infinite history $H$ is opaque if
every finite prefix of $H$ is final-state opaque.
Theorem~\ref{th:lc} now implies that
Opacity$_{ut}$ $\subseteq$ DU-Opacity.

By Definition~\ref{def:opaque} and Corollary~\ref{cr:pc}, it follows that DU-Opacity $\subseteq$ Opacity$_{ut}$.
\end{proof}
\subsection{Relation with Other definitions}
\label{sec:def}
Explicitly using the deferred-update semantics in an opacity definition
was first proposed by Guerraoui et al.~\cite{GHS08-permissiveness}
and later adopted by
Kuznetsov and Ravi~\cite{KR11}.
In both papers, opacity is only defined on sequential histories, where
every invocation of a t-operation is immediately
followed by a matching response.
In particular, these definitions require the final-state serialization to respect
the \emph{read-commit order}: $H$ is opaque by their definition if there exists a final-state serialization $S$ of $H$
such that if a t-read of a t-object $X$
by a transaction $T_k$ precedes the tryC of a transaction $T_m$
that commits on $X$ in $H$, then $T_k$ precedes $T_m$ in $S$.
But we observe that this definition is not equivalent to opacity
even for sequential histories.
%: Figure~\ref{fig:uw-example} depicts a counter-example.
In fact the property defined in~\cite{GHS08-permissiveness} is strictly
stronger than du-opacity: the sequential history in Figure~\ref{fig:du-example} is du-opaque (and consequently opaque by Theorem~\ref{th:gkkr}).
%We expect that opaque TM implementations would export such a history.
We can derive a du-opaque serialization $S$ for this history such that
$\ms{seq}(S)= T_1,T_3,T_2$.
In fact, this is the only final-state serialization for $H$. However, by the above definition, $T_2$
must precede $T_3$ in any serialization of this history since the
response of $\Read_2(X)$ precedes the invocation of tryC$_3()$.
Thus, $H$ is not opaque by the definition in
\cite{GHS08-permissiveness}.

The recently introduced \emph{TMS2} correctness
condition~\cite{TMS09, TMS-WTTM} is another
attempt to clarify opacity.
Two transactions are said to \emph{conflict} in a given history if they access the same t-object and at least one of them successfully commits to it.
Informally, for each history $H$ in TMS2, there exists a final-state
serialization $S$ of $H$ such that if two transactions
$T_1$ and $T_2$ conflict on t-object $X$ in $H$, where $X\in \Wset(T_1) \cap \Rset(T_2)$
and tryC of $T_1$ precedes the tryC of $T_2$, then $T_1$ must precede $T_2$ in $S$.
We conjecture that every history in TMS2 is du-opaque,
but not vice-versa.
Figure~\ref{fig:tms2} depicts a history $H$ that is du-opaque, but not TMS2.
Indeed, there exists a du-opaque serialization $S$ of $H$ such that
$\ms{seq}(S)=T_2,T_1$.
On the other hand, $T_1$ and $T_2$ are in conflict, $T_1$ commits before $T_2$, but
there does not exist any final-state serialization of $H$ in which $T_1$ precedes $T_2$.
\section{Discussion}
\label{sec:disc}
It is widely accepted that a correctness condition on a set of histories
should be a safety property, i.e., should be prefix- and limit-closed.
The definition of opacity proposed in~\cite{tm-book}
forcefully achieves prefix-closure by restricting final-state opacity
to prefix-closed histories,
and trivially achieves limit-closure by reducing correctness of an
infinite history to correctness of its prefixes.

This paper proposes a correctness criterion
that explicitly disallows reading from an uncommitted transaction, which
ensures prefix-closure and (under the restriction that every transaction
eventually completes every operation it invokes, but not neccesarily commits or aborts) limit-closure.
We believe that this constructive definition is useful to TM
practitioners, since it streamlines possible implementations of t-read
and tryC operations.
Moreover, it seems that du-opacity already captures the sets of histories
exported by most existing opaque TM implementations~\cite{TL2,DSTM03,noREC}.
In contrast, the recent \emph{pessimistic} STM 
implementation~\cite{pessimisticTM-disc}, 
in which no transaction aborts, 
does not intend to provide the deferred-update
semantics and, thus, is not in the focus of this paper.
Technically, the pessimistic STM of~\cite{pessimisticTM-disc} is not opaque, 
and certainly, not du-opaque.

%However, the notion of deferred-update semantics and consequently du-opacity does not apply to
%models like the pessimistic STM~\cite{pessimisticTM}, in which every transaction eventually commits.
%%The blocking pessimistic TM implementation~\cite{pessimisticTM}, is
%%serializable, and since it has no aborts, is also du-opaque.

To the best of our knowledge,
there is no prior work proving that any TM correctness property
is a safety property in the formal sense.
%There has been recent work on verifying Opacity by specifying it as an automaton and using machine checked simulation proofs
%The main contribution of this paper is to help understand what does
%it mean for Opacity to be safety property.
%It suggests that for most TM implementations, any safe restriction of Opacity must not allow a transaction to read-from a transaction that has not yet invoked \emph{tryCommit}.
The argumentation in the proof of Theorem~\ref{th:lc} is
inspired by the proof sketch  in~\cite{Lyn96} of the safety
of linearizability~\cite{HW90}, but turns out to be trickier due
to the more complicated definition of du-opacity.

\section*{Acknowledgements}

The authors are grateful to the anonymous reviewers
% of WTTM'12 and ICDCS'13
for insightful comments on the previous versions of this paper,
and
% The last author would like to thank
Victor Luchangco for interesting discussions.
% on opacity during PODC'12.

This work was supported by the European Union Seventh Framework Programme
(FP7/2007-2013) under grant agreement Number 238639, ITN project
TRANSFORM, and grant agreement Number 248465, the S(o)OS project.

%
%\newpage
%{\small
\bibliography{references}
%}
%\appendix
%\input{graph}
\end{document}
%%% Local Variables:
%%% mode: latex
%%% mode: flyspell
%%% Local IspellDict: "american"
%%% mode: outline-minor
%%% End: